\documentclass[a4paper,UKenglish,cleveref,nameinlink, autoref, cleveref, thm-restate]{lipics-v2021}

\pdfoutput=1 %
\hideLIPIcs

\bibliographystyle{plainurl}%

\title{The Threshold Problem for Hypergeometric Sequences with Quadratic Parameters}

\titlerunning{The Threshold Problem for Hypergeometric Sequences with Quadratic Parameters}

\author{George Kenison}{School of Computer Science and Mathematics, Liverpool John Moores University, Liverpool, UK}{g.j.kenison@ljmu.ac.uk}{}{}

\authorrunning{G.~Kenison} %

\Copyright{George Kenison} %

\ccsdesc[500]{Mathematics of computing~Discrete mathematics}
\ccsdesc[500]{Computing methodologies~Symbolic and algebraic algorithms}
\ccsdesc[500]{Computing methodologies~Algebraic algorithms} %

\keywords{Threshold Problem, Membership Problem,  Hypergeometric Sequences} %

\nolinenumbers %

\usepackage{amsmath, amsthm, amssymb, mathrsfs, calc}
\usepackage{mathtools}
\usepackage{url}
\usepackage{mleftright}
\usepackage{microtype}
\usepackage[small,basic]{complexity}
\usepackage{aliascnt}
\usepackage{hyperref}

\usepackage[T1]{fontenc}

\newcommand{\appref}[1]{\hyperref[#1]{Appendix~\ref*{#1}}}

\newaliascnt{ass}{theorem}
\newtheorem{assumption}[ass]{Property}

\aliascntresetthe{ass}

\renewcommand{\Re}{\operatorname{Re}}

\renewcommand{\S}{\textsection}

\renewcommand{\epsilon}{\ensuremath\varepsilon}

\renewcommand{\Lambda}{\ensuremath\mu}

\renewcommand{\R}{\mathbb{R}}
\newcommand{\Z}{\mathbb{Z}}
\newcommand{\N}{\mathbb{N}}

\newcommand{\Q}{\mathbb{Q}}

\renewcommand{\Re}{\operatorname{Re}}

\newcommand{\oalpha}{\smash{\overline{\alpha}}\vphantom{\alpha}}

\newcommand*\wc{{}\cdot{}}

\providecommand*{\eu}%
{\ensuremath{\mathrm{e}}}
\providecommand*{\iu}%
{\ensuremath{\mathrm{i}}}

\newcommand{\seq}[3][{}]{\langle #2 \rangle_{#3}^{#1}}

\date{}

\newcounter{contributions}

\begin{document}

\maketitle

\begin{abstract}
Hypergeometric sequences are rational-valued sequences that satisfy first-order linear recurrence relations with polynomial coefficients; that is,
\(\seq[\infty]{u_n}{n=0}\) is hypergeometric if it satisfies a first-order linear recurrence of the form \(p(n)u_{n+1} =  q(n)u_{n}\) with polynomial coefficients \(p,q\in\Z[x]\) and \(u_0\in\Q\).

In this paper, we consider the Threshold Problem for hypergeometric sequences: given a hypergeometric sequence \(\seq[\infty]{u_n}{n=0}\) and a threshold \(t\in\Q\), determine whether \(u_n \ge t\) for each \(n\in\N_0\).
We  establish decidability for the Threshold Problem under the assumption that  the coefficients \(p\) and \(q\) are monic polynomials whose roots lie in an imaginary quadratic extension of \(\Q\).
We also establish conditional decidability results; for example,
under the assumption that the coefficients \(p\) and \(q\) are monic polynomials whose roots lie in any number of quadratic extensions of \(\Q\), the Threshold Problem is decidable subject to the truth of Schanuel's conjecture.
Finally, we show how our approach both recovers and extends some of the recent decidability results on the Membership Problem for hypergeometric sequences with quadratic parameters.
\end{abstract}

\section{Introduction}

\subsection*{Background}
The \emph{Threshold Problem} is a fundamental open decision problem in automated verification that asks to determine whether every term in a recursively defined sequence is bounded from below by a given value (commonly, the \emph{threshold}).
The Threshold Problem appears under many guises across the computational and mathematical sciences with applications in fields as diverse as software verification, probabilistic model checking, combinatorics, and formal languages (we refer the interested reader to the discussion in \cite{ouaknine2014positivity} and the references therein).

The inputs for the Threshold Problem are a recursively defined sequence
\(\seq[\infty]{u_n}{n=0}\subseteq \Q\) and a threshold \(t\in\Q\).
(Hereafter, we shall use tuple notation \((\seq[\infty]{u_n}{n=0},t)\) as shorthand for a given problem instance.) 
Threshold then asks to determine 
 whether  \(u_n \ge t\) for each \(n\in\N_0\).
 Arguably, the variant of the Threshold Problem that has received the most attention in automated verification is the Positivity Problem for \emph{C-finite sequences} (those sequences that obey a linear recurrence relation with constant coefficients).
 Therein, Positivity sets as a threshold \(t=0\) and so asks whether every term in a C-finite sequence is non-negative.

Herein we consider the hypergeometric subclass of \emph{P-finite sequences} (those sequences that satisfy a linear recurrence relation with polynomial coefficients) \cite{kauers2011tetrahedron}.
Recall that a \emph{hypergeometric sequence} is a rational-valued first-order linear recurrence sequence with polynomial coefficients;
that is to say, a sequence \(\seq[\infty]{u_n}{n=0}\subseteq \Q\) that satisfies a relation of the form
	\begin{equation} \label{eq:rec}
		p(n)u_{n+1} = q(n)u_n
	\end{equation}
where \(p,q\in\Z[x]\) and \(p(x)\) has no non-negative integer zeros.
By the latter assumption on \(p(x)\), the recurrence relation \eqref{eq:rec} uniquely defines an infinite sequence of rational numbers once the initial value \(u_0\in\Q\) is specified.
 For a hypergeometric sequence \(\seq[\infty]{u_n}{n=0}\) satisfying \eqref{eq:rec}, we call the roots of the polynomial \(pq\) the sequence's \emph{parameters}.
Hypergeometric sequences and their associated generating functions, the hypergeometric series, are commonplace in fields such as numerical analysis and analytic combinatorics~\cite{FS09,kauers2011tetrahedron}.

In this paper, we consider the Threshold Problem for hypergeometric sequences.
Na{\"\i}vely, we might construe that decidability of the Threshold Problem in this setting is easily settled.
Consider an instance of the Threshold Problem \((\seq[\infty]{u_n}{n=0}, t)\). %
Without loss of generality, we can assume that \(\seq[\infty]{u_n}{n=0}\) either diverges to infinity or converges to a non-zero limit (further explanation behind this assumption is given in the Preliminaries).
Suppose that \(\seq[\infty]{u_n}{n=0}\) converges to a limit not equal to \(t\).
From the form of the recurrence relation in \eqref{eq:rec}, we  can
compute a bound \(B\) such that if \(n>B\) then \(u_n > t\) or \(u_n < t\).
Similar deductions handle the case that \(\seq[\infty]{u_n}{n=0}\) diverges to infinity.
In the case that the limit of  \(\seq[\infty]{u_n}{n=0}\) is the threshold \(t\),
we can compute a similar bound based on the fact that the convergence to \(t\) is eventually monotonic.
It follows that, in each case, the Threshold Problem reduces to exhaustively checking whether \(u_n \ge t\) for each \(n\in\{0,1,\ldots, B\}\).
Unfortunately this reasoning does not suffice to decide the Threshold Problem.
Indeed, we do not know how to decide whether a generic hypergeometric sequence converges to a given rational limit.
Further, such convergence questions are intricately linked to open problems concerning algebraic relations for the gamma function (we give further details below).

\subsection*{Contributions}

Our primary contributions are:
	\begin{alphaenumerate}
		\item The Threshold Problem for hypergeometric sequences whose polynomial coefficients are monic and split over an imaginary quadratic field are decidable (\autoref{thm:mainquadratic}).
		\item The Threshold Problem for hypergeometric sequences whose polynomial coefficients are monic and each irreducible factor of \(pq\) is either linear or quadratic is decidable subject to the truth of Schanuel's conjecture (\autoref{cor:quadraticSC}).
		\setcounter{contributions}{\value{enumi}}
	\end{alphaenumerate}
We delay a formal statement of Schanuel's conjecture to the Preliminaries.
For our conditional decidability results, we note that only termination is conditional on Schanuel's conjecture and that correctness of our procedure is unconditional (\autoref{rem:MacintyreWilkie}).
\autoref{cor:quadraticSC} follows from the more general result:
	\begin{alphaenumerate}
\setcounter{enumi}{\value{contributions}}
	\item The Threshold Problem for hypergeometric sequences whose monic polynomial coefficients possess \autoref{ass:symmetry} is decidable subject to the truth of Schanuel's conjecture (\autoref{thm:mainclass} in \autoref{ssec:4.1}).
\setcounter{contributions}{\value{enumi}}
	\end{alphaenumerate}
Polynomials with \autoref{ass:symmetry} (\autoref{ssec:4.1}) lead to classes of hypergeometric sequences with unnested radical and cyclotomic parameters.

Our secondary contribution concerns the \emph{Membership Problem} for hypergeometric sequences.
Given a hypergeometric sequence \(\seq[\infty]{u_n}{n=0}\) and target \(t\in\Q\), Membership asks to determine whether there is an \(n\in\N_0\) for which \(u_n=t\).
\begin{alphaenumerate}
\setcounter{enumi}{\value{contributions}}
	\item For classes of hypergeometric sequences where we establish (un)conditional decidability of the Threshold Problem, we also obtain (un)conditional decidability of the Membership Problem.
	This contribution is a straightforward corollary of the following observation: for hypergeometric sequences, decidability of the Membership Problem reduces to that of the Threshold Problem (\autoref{prop:reduction}).
\end{alphaenumerate}
We note this secondary contribution both recovers and extends some of the recent results in work by Kenison et al.~\cite{kenison2023membership}.
For the avoidance of doubt, we break new ground for the Membership Problem.
For example, we have conditional decidability of the Membership Problem for hypergeometric sequences with unnested radical and cyclotomic parameters (both classes fall outside of the remit of the previous works). 
A concrete subclass is given by those  hypergeometric sequences whose polynomial coefficients are of the form \((x^2-\ell_1)(x^2-\ell_2)\cdots (x^2-\ell_d)\) where \(\ell_1,\ldots, \ell_d\in\Z\).

\subsection*{Approach}
As previously mentioned, an obstacle that prevents us from settling decidability of the Threshold Problem for hypergeometric sequences is determining whether a given hypergeometric sequence converges to some rational limit.

To each hypergeometric sequence
 \(\seq[\infty]{u_n}{n=0}\) satisfying \eqref{eq:rec}, we associate the \emph{shift quotient} \(r(x):= q(x)/p(x) \in\Q(x)\).
It is clear that the terms of  \(\seq[\infty]{u_n}{n=0}\) are given by a sequence of partial products such that the \(n\)th term is given thus:
		\( u_n = u_0 \cdot \prod_{k=0}^n r(k) \).
Without loss of generality, we can a normalise a sequence with \(u_0\neq 0\) by assuming that \({u_0=1}\).
(We note that the Threshold Problem is trivial to decide when \(u_0=0\).)
Thus our consideration of the Threshold Problem reduces to analysing the sequence of partial products \(\seq[\infty]{\prod_{k=0}^n r(k)}{n=0}\).
In all problem instances where the Threshold Problem is not trivial to determine, 
we employ a classical theorem in analysis (\autoref{thm:productgamma}) that permits us to write the limit of the sequence \(\seq[\infty]{\prod_{k=0}^n r(k)}{n=0}\) as a quotient of two finite products involving the gamma function.
Thus the Threshold Problem for hypergeometric sequences reduces to testing an equality between gamma products.
Our novel approach leverages algebraic and transcendental properties to settle such equality tests.
For example, we frequently employ the algebraic independence of transcendental constants \(\pi\) and \(\eu^{\pi}\) (a celebrated consequence of Nesterenko's work on modular functions \cite{nesterenko1996modular}).

There is a large corpus of research connecting hypergeometric sequences and the gamma function  (often under the guise of infinite product identities~\cite{allouche2015gamma, chamberland2013gamma, dilcher2018gamma}).
This is particularly relevant for our approach (as described above) and sets us apart from previous papers on the Membership Problem in this setting~\cite{kenison2023membership, nosan2022membership}. 
As a nod to the wider appeal of our approach, let us consider two examples from the literature on numeric and symbolic computation.

\begin{example}[{Numerical evaluation of the Kepler--Bouwkamp constant \cite[Section 4]{chamberland2013gamma}}] \label{ex:KB}
This example closely follows work by Chamberland and Straub~\cite{chamberland2013gamma}.
Those authors demonstrate the use of hypergeometric sequences to efficiently approximate certain numerical constants given by infinite products.
One such example is the Kepler--Bouwkamp constant \(\prod_{k=3}^\infty \cos(\pi/k)\).
The convergence of this infinite product is notoriously slow: the error bound between the approximation \(\prod_{k=3}^{10^4} \cos(\pi/k)\) and the Kepler--Bouwkamp constant is \(10^{-4}\).

Let us consider a Pad\'{e} approximation that uses hypergeometric sequences with quadratic parameters. We recall that
the \emph{\([2,2]\)-Pad\'{e} approximant} of a function \(f(x)\) is the rational function \(q(x)/p(x)\) where \(p,q\in\Z[x]\) are quadratic polynomials for which the Maclaurin series of \(q(x)/p(x)\) agrees with that of \(f(x)\) up to order \(4\).
For example, the \([2,2]\)-Pad\'{e} approximant  of \(\cos(x)\) is
	\begin{equation*}
		r_2(x) := \frac{12-5x^2}{12+x^2} = \cos(x) + O(x^6).
	\end{equation*}
Now consider
				\begin{equation*}
					\prod_{k=3}^\infty r_2(\pi/k) = \prod_{k=3}^\infty \frac{12k^2- 5\pi^2}{12k^2 + \pi^2} = \frac{\Gamma(3- \tfrac{\iu}{6}\sqrt{3}\pi)\Gamma(3+\tfrac{\iu}{6}\sqrt{3}\pi)}{\Gamma(3-\frac{1}{6}\sqrt{15}\pi)\Gamma(3+\frac{1}{6}\sqrt{15}\pi)}.
				\end{equation*} 
Here the evaluation as a quotient of two gamma products follows from \autoref{thm:productgamma}.
We pass this evaluation to any modern computer algebra system and determine that  the error between \(\prod_{k=3}^\infty r_2(\pi/k) \) and the Kepler--Bouwkamp constant is bounded by \(10^{-3}\).

Loosely speaking, the \([2,2]\)-Pad\'{e} approximant leverages a hypergeometric sequence with quadratic parameters.\footnote{The quadratics \(12k^2-5\pi^2\) and \(12k^2+\pi^2\) do not have rational coefficients, but in \appref{app:KB} we demonstrate how our approach handles decidability of the Threshold Problem in this example.}  Higher-order approximants will give a closer numerical estimate via hypergeometric sequences with higher-degree parameters.
\end{example}

\begin{example}[Evaluation of a gamma product with cyclotomic parameters]
Here we include a simple (yet concrete) example~\cite[pages 4--6]{borwein2004experimentation} of the difficulty in determining whether a hypergeometric sequence converges to a rational limit.
Consider
\begin{equation} \label{eq:borwein}
	\prod_{k=2}^\infty \frac{k^5 - 1}{k^5 +1} = \frac{2 \cdot \Gamma\bigl(-\omega_{10}\bigr) \Gamma\bigl(\omega_{10}^2\bigr) \Gamma\bigl(-\omega_{10}^3\bigr) \Gamma\bigl(\omega_{10}^4 \bigr)}{ 5 \cdot \Gamma\bigl(\omega_{10}\bigr) \Gamma\bigl(-\omega_{10}^2\bigr) \Gamma\bigl(\omega_{10}^3 \bigr) \Gamma \bigl(-\omega_{10}^4 \bigr)}
\end{equation}
where \(\omega_{10} = \eu^{2\pi \iu/10}\) and, once again, the right-hand side is derived from \autoref{thm:productgamma}.
As noted by the authors of~\cite{borwein2004experimentation}, it is not known whether the limit in \eqref{eq:borwein} is even algebraic.

Whilst the state of the art cannot generally handle the evaluation of expressions given by gamma products, many works in the literature have established identities for restricted classes of products
(a non-exhaustive list includes \cite{chamberland2013gamma, martin2009product, nijenhuis2010gamma, paulsen2019gamma,sandor1989gamma, vidunas2005gamma}).
Our connection to such interests stems from our approach herein: our reduction-step leaves us to determine whether the ratio of two gamma products (as above) is rational. 
\end{example}

\subsection*{Related Work}

\subparagraph*{Membership for Hypergeometric Sequences.} Two recent works consider the Membership Problem for hypergeometric sequences \cite{kenison2023membership, nosan2022membership}.
In both of these works, the authors use \(p\)-adic techniques and divisibility arguments (in stark contrast to the approach herein).
It is worth noting that such techniques seem appropriate only for the Membership Problem and not the Threshold Problem.

The authors of \cite{nosan2022membership} establish decidability of the Membership Problem for the class of hypergeometric sequences with rational parameters.
Closer to our setting, the authors of \cite{kenison2023membership} establish decidability of the Membership Problem for the class of hypergeometric sequences whose polynomial coefficients (as in \eqref{eq:rec}) are both monic and split over a quadratic field.
By comparison to \cite{kenison2023membership}, we establish decidability of not only the Membership Problem, but also the Threshold Problem (\autoref{thm:mainquadratic}) for hypergeometric sequences whose monic polynomial coefficients split over an imaginary quadratic field.
We note our result for the Membership Problem is weaker for sequences whose monic polynomial coefficients split over a real quadratic field: in this setting we are limited to conditional decidability (\autoref{prop:reduction} and \autoref{cor:quadraticSC}).

\subparagraph*{Positivity for P-finite sequences.}
Identities for P-finite sequences are frequent in the literature; however, as noted by Kauers and Pillwein, “in contrast,\ldots almost no algorithms are available for inequalities'' in this setting \cite{kauers2010positive}.
Determining whether the terms of a P-finite sequence are non-negative 
has garnered much attention in recent works (see, for example, \cite{ibrahim2023positivity, kauers2010positive, kenison2020positivity, pillwein2015positivity}).
On the one hand, these works handle higher-order P-finite sequences than the hypergeometric sequences we consider.
On the other hand, the algorithms described in the above studies are restricted in their applicability (placing syntactic restrictions on the polynomial coefficients) and termination is not guaranteed for all initial values.
Indeed, genericity of initial conditions (in the sense that, the growth rate of a recurrence sequence is determined by a positive dominant eigenvalue) is required for the algorithms in \cite{ibrahim2023positivity, kauers2010positive}.
Additionally, determining whether the initial conditions of a given sequence are generic is an open problem (even at low orders) \cite{kenison2020positivity}.

\subparagraph*{Positivity for C-finite sequences.}

It is easily seen that the Threshold Problem for C-finite sequences reduces to the {Positivity Problem} for C-finite sequences. 
Recall that \emph{Positivity} asks to determine whether all terms in a sequence lie above the threshold zero (so the variant of the Threshold Problem where \(t=0\)).
This reduction is straightforward: %
to determine whether \(u_n\ge t\) for each \(n\in\N_0\) 
we can equivalently ask whether \(v_n := u_n - t \ge 0\) for each \(n\in\N_0\).
Observe that \(\seq[\infty]{v_n}{n=0}\) is C-finite since it is given by the difference of two C-finite sequences \(\seq[\infty]{u_n}{n=0}\) and \(\seq[\infty]{t}{n=0}\) and we are done.

   Decidability of the Positivity Problem for C-finite sequences is considered a challenging open problem.
   Further, Positivity and its variants have garnered much research interest in recent works \cite{halava2006positivity, kenison2023positivity, ouaknine2014simple, ouaknine2014ultimate, ouaknine2015termination, ouaknine2014positivity}.
Akin to our focus on restricted classes of hypergeometric sequences herein, the authors of \cite{kenison2023positivity} and \cite{ouaknine2014simple} consider restricted classes of C-finite sequences: both of those works place restrictions on the algebraic properties of the associated recurrence relations.

\subsection*{Structure and Outline}

This paper is structured as follows.
In the next section we gather together relevant preliminary material.
In \cref{sec:maingauss,sec:schanuel}, we establish (un)conditional decidability of the Threshold Problem for classes of hypergeometric sequences.
In one sense, \autoref{sec:maingauss} gives an overview of our approach in the setting of hypergeometric sequences with quadratic parameters.
In \autoref{sec:schanuel}, we introduce the class of polynomials with \autoref{ass:symmetry} (\autoref{ssec:4.1}) and then show that the Threshold Problem for hypergeometric sequences whose monic polynomial coefficients possess \autoref{ass:symmetry} is decidable subject to the truth of Schanuel's conjecture (\autoref{thm:mainclass} in \autoref{ssec:4.2}). 
We make suggestions for future avenues of research in the conclusion (\autoref{sec:conclusion}).
Proofs omitted from the main text are given in \appref{app:turing}.
\appref{app:KB} contains a worked example related to the  Kepler--Bouwkamp (\cref{ex:KB}) and demonstrates an application of Schanuel's conjecture.

\section{Preliminaries} \label{sec:prelim}

\subparagraph*{The Gamma Function.} \label{ssec:gamma}

The approach herein relies on transcendence theory for the gamma function \(\Gamma\) where
	\begin{equation*}
		\Gamma(z) = \int_0^\infty x^{z-1} \eu^{-x}\, dx \quad \text{for \(z\in\mathbb{C}\) with \(\Re(z)>0\).}
	\end{equation*}
It is possible to analytically extend the domain of \(\Gamma\) to the whole complex plane minus the non-positive integers where the function has simple poles.
We briefly recall standard results for the gamma function.
Further details and historical accounts are given in a number of sources (cf.~\cite{andrews1999special,whittaker1996analysis}).

The standard relations for the gamma function give the functional identities:
 the \emph{recurrence} (or \emph{translation}) \emph{property} \(\Gamma(z+1) = z\Gamma(z)\) for \(z\notin \Z\) and the \emph{reflection property} \(\Gamma(z)\Gamma(1-z) = \pi/\sin(\pi z)\) for \(z\notin \Z\).
In the domain of the gamma function, repeated application of the translation property leads to the following `rising factorial' identity.
For \(n\in\{1,2,\ldots\}\), we have
\begin{equation*}
\frac{\Gamma(z+n)}{\Gamma(z)} = z(z+1)\cdots (z+n-1).
\end{equation*}
Similarly, the `falling factorial' identity is given by
\begin{equation*}
\frac{\Gamma(z+1)}{\Gamma(z-n+1)} = z(z-1)\cdots (z-n+1).
\end{equation*}

The next technical lemma is derived from the aforementioned properties of the gamma function.  We employ the notation \(\tfrac{1}{2}\Z\) for the set of integers and half-integers.
\begin{lemma} \label{lem:pairproduct}
Let \(\rho \in \tfrac{1}{2}\Z\).
Suppose that \(w\in\mathbb{C}\) is an algebraic number such that both \(\rho+w\) and \(\rho-w\) lie in the domain of the gamma function and \(w \not\in \tfrac{1}{2}\Z\).
Up to multiplication by an algebraic number, we have the following equalities:
\begin{equation*}
 \Gamma(\rho + w)\Gamma(\rho- w) = 
 	\begin{dcases}
 		\frac{2\pi\iu \eu^{\pi w \iu} }{w (1-\eu^{2\pi w \iu})} & \text{if $\rho$ is an integer, or} \\
 		\frac{2\pi \eu^{\pi w \iu}}{\eu^{2\pi w \iu} + 1} & \text{if $\rho$ is a half-integer}.
 	\end{dcases}
\end{equation*}
\end{lemma}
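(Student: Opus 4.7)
The plan is to reduce each of the two products to a fixed pair of gamma values ($\Gamma(w)\Gamma(-w)$ in the integer case, and $\Gamma(\tfrac12+w)\Gamma(\tfrac12-w)$ in the half-integer case) by peeling off a polynomial-in-$w$ factor via the translation identity. Since $w$ is algebraic, any such polynomial factor is algebraic and may be absorbed into the implicit constant. Then I would finish by applying the reflection formula and expanding the resulting sine or cosine in exponential form.

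Concretely, suppose first that $\rho=n\in\Z$ (handle $n\ge 1$; the case $n\le 0$ is symmetric, using the falling-factorial identity instead). The rising-factorial identity yields
\begin{equation*}
\Gamma(n+w)=\Gamma(w)\cdot w(w+1)\cdots(w+n-1),
\qquad
\Gamma(n-w)=\Gamma(-w)\cdot(-1)^{n}w(w-1)\cdots(w-n+1),
\end{equation*}
and the two polynomial factors in $w$ are algebraic. Thus $\Gamma(n+w)\Gamma(n-w)$ equals an algebraic multiple of $\Gamma(w)\Gamma(-w)$. Using the translation property $\Gamma(1-w)=-w\,\Gamma(-w)$ and the reflection property $\Gamma(w)\Gamma(1-w)=\pi/\sin(\pi w)$, we find $\Gamma(w)\Gamma(-w)=-\pi/(w\sin(\pi w))$. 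Writing $\sin(\pi w)=(\eu^{\pi w\iu}-\eu^{-\pi w\iu})/(2\iu)$ and clearing denominators then gives the claimed expression $2\pi\iu\,\eu^{\pi w\iu}/\bigl(w(1-\eu^{2\pi w\iu})\bigr)$, up to an algebraic factor.

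Now suppose $\rho=m+\tfrac12$ for some $m\in\Z$ (again take $m\ge 0$; the other case is analogous). By the same rising-factorial identity, $\Gamma(m+\tfrac12\pm w)$ equals $\Gamma(\tfrac12\pm w)$ times a polynomial in $w$, hence algebraic. Applying reflection with $z=\tfrac12+w$ gives $\Gamma(\tfrac12+w)\Gamma(\tfrac12-w)=\pi/\sin\!\bigl(\pi(\tfrac12+w)\bigr)=\pi/\cos(\pi w)$. Expanding $\cos(\pi w)=(\eu^{\pi w\iu}+\eu^{-\pi w\iu})/2$ yields $2\pi\,\eu^{\pi w\iu}/(\eu^{2\pi w\iu}+1)$, matching the stated formula.

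The hypotheses on $w$ (not an integer, not a half-integer) are exactly what is needed to ensure that $\sin(\pi w)$ and $\cos(\pi w)$ do not vanish, so that the denominators produced above are non-zero and the reflection identity is applicable at the relevant arguments. There is no real obstacle in this lemma; the only place to take a little care is bookkeeping signs for negative $n$ or $m$, which is handled symmetrically by using the falling-factorial identity in place of the rising one.
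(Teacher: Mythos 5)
Your proposal is correct and follows essentially the same route as the paper's proof: reduce via the rising/falling factorial identities to $\Gamma(w)\Gamma(-w)$ or $\Gamma(\tfrac12+w)\Gamma(\tfrac12-w)$ (absorbing the polynomial-in-$w$ prefactors as algebraic multipliers), then apply the recurrence and reflection properties and expand $\sin(\pi w)$ or $\cos(\pi w)$ in exponentials. The only difference is that you write out the polynomial prefactors and the $n\le 0$ split explicitly, whereas the paper leaves them implicit under the phrase ``as appropriate to the sign of $\rho$.''
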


\begin{proof}
Let us apply the rising and falling factorial identities (as appropriate to the sign of \(\rho\)).
Then, up to multiplication by an algebraic number, we have the following equalities:
\begin{equation*}
 \Gamma(\rho + w)\Gamma(\rho- w) = 
 	\begin{dcases}
 		\Gamma(w)\Gamma(-w) & \text{if $\rho$ is an integer, or} \\
 		\Gamma(1/2+w)\Gamma(1/2-w) & \text{if $\rho$ is a half-integer}.
 	\end{dcases}
\end{equation*}
Consider the first of the two cases above.
The reflection and recurrence formulas lead to
	\begin{equation*}
		\Gamma(w)\Gamma(-w) = \frac{\Gamma(w)\Gamma(1-w)}{-w}
		= \frac{\pi}{-w \sin(\pi w)}
		=  -\frac{2\pi\iu}{w (\eu^{\pi w \iu} - \eu^{-\pi w \iu})}.
	\end{equation*}
For the second case, we employ the cosine variant of Euler's reflection formula to obtain 
\begin{equation*}
	\Gamma(1/2 + w)\Gamma(1/2 - w) = \frac{\pi}{\cos(\pi w)} = \frac{2\pi}{\eu^{\pi w \iu} + \eu^{-\pi w \iu}}.
\end{equation*}
The equalities in the statement of the lemma quickly follow.
\end{proof}

\subparagraph*{Decidability and Reduction Results.}

Recall that a rational-valued sequence \(\seq[\infty]{u_n}{n=0}\) is \emph{hypergeometric} if it satisfies a first-order recurrence relation of the form \eqref{eq:rec} with polynomial coefficients \(p,q\in\Z[x]\).
Due to space restrictions, we omit the proofs of \autoref{lem:MPreduction}, \autoref{prop:reductiontogamma2}, and \autoref{prop:reduction} from the main text.
Each proof is included in \appref{app:turing}.

The following straightforward lemma appears in previous works \cite{kenison2020positivity,nosan2022membership}.
\begin{restatable}{lemma}{lemMPreduction} \label{lem:MPreduction}
	Consider the class of hypergeometric sequences \(\seq{u_n}{n}\) whose shift quotients \(r(n)\)
	either diverge to \(\pm\infty\) or converge to a limit \(\ell\) with \(|\ell|\neq 1\).
	For this class, the Membership and Threshold Problems are both decidable.
\end{restatable}

Thus to decide the Membership and Threshold Problems for hypergeometric sequences, we need only consider the sequences whose  shift quotients \(r(n)\) converge to \(\pm1\) as \(n\to\infty\).
We say an infinite product \(\prod_{k=0}^\infty r(k)\) \emph{converges} if the sequence of partial products converges to a finite non-zero limit (otherwise the product is said to diverge).
Recall the following classical theorem (\cite[\S12]{whittaker1996analysis} and \cite{chamberland2013gamma}).
\begin{theorem} \label{thm:productgamma}
 	Consider the rational function
 		\begin{equation*}
 					r(k) := \frac{c(k+\alpha_1) \cdots (k+\alpha_m)}{(k+\beta_1) \cdots (k+\beta_{m'})}
 		\end{equation*}
 		where we suppose that each \(\alpha_1,\ldots, \alpha_m, \beta_1,\ldots, \beta_{m'}\) is a complex number that is neither zero nor a negative integer.
 		The infinite product \(\prod_{k=0}^\infty r(k)\) converges to a finite non-zero limit only if \(c=1\), \(m=m'\), and \(\sum_j \alpha_j = \sum_j \beta_j\).
 		Further, the value of the limit is given by
 			\begin{equation*}
 			\prod_{k=0}^\infty r(k) = \prod_{j=1}^m \frac{\Gamma(\beta_j)}{\Gamma(\alpha_j)}.
 			\end{equation*}
\end{theorem}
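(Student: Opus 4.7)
The plan is to split the proof into two parts: first derive the necessary conditions $c=1$, $m=m'$, $\sum_j \alpha_j = \sum_j \beta_j$ from asymptotic analysis of $r(k)$, and then evaluate the limit by collapsing the partial products via the rising factorial identity and applying Stirling-type asymptotics of the gamma function.

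For the necessary conditions, I would start from the well-known principle that an infinite product $\prod_k r(k)$ with $r(k)\neq 0$ converges to a finite non-zero limit only if $r(k)\to 1$ and $\sum_k (r(k)-1)$ converges (absolutely, once the factors are close to $1$). Expanding the rational function at infinity, $r(k) = c \, k^{m-m'}\bigl(1 + O(1/k)\bigr)$, so the requirement $r(k)\to 1$ forces $m=m'$ and $c=1$ at once. With those identifications, a further Taylor expansion yields
\begin{equation*}
r(k) = 1 + \frac{\sum_j \alpha_j - \sum_j \beta_j}{k} + O\!\left(\frac{1}{k^2}\right),
\end{equation*}
so convergence of $\sum_k (r(k)-1)$ demands $\sum_j \alpha_j = \sum_j \beta_j$, which is the third necessary condition.

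For the value of the limit, under the above conditions I would apply the rising factorial identity from \autoref{ssec:gamma} to each linear factor. Writing out the $N$th partial product and telescoping,
\begin{equation*}
\prod_{k=0}^{N-1} r(k) = \prod_{j=1}^m \frac{\alpha_j(\alpha_j+1)\cdots(\alpha_j+N-1)}{\beta_j(\beta_j+1)\cdots(\beta_j+N-1)} = \prod_{j=1}^m \frac{\Gamma(\beta_j)}{\Gamma(\alpha_j)} \cdot \prod_{j=1}^m \frac{\Gamma(\alpha_j+N)}{\Gamma(\beta_j+N)}.
\end{equation*}
It then remains to show that the right-hand product tends to $1$ as $N\to\infty$.

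The main technical step, and the point I would expect to require the most care, is the evaluation of the limit of $\prod_j \Gamma(\alpha_j+N)/\Gamma(\beta_j+N)$. Here I would invoke the standard asymptotic $\Gamma(z+N)/\Gamma(w+N) \sim N^{z-w}$ as $N\to\infty$ (a direct consequence of Stirling's formula, valid uniformly for $z,w$ in any compact set avoiding the poles). This gives
\begin{equation*}
\prod_{j=1}^m \frac{\Gamma(\alpha_j+N)}{\Gamma(\beta_j+N)} \sim N^{\sum_j \alpha_j - \sum_j \beta_j} = N^0 = 1,
\end{equation*}
precisely because of the balancing condition on $\sum_j \alpha_j$ and $\sum_j \beta_j$. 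Combining with the telescoping identity yields the stated formula $\prod_{k=0}^\infty r(k) = \prod_{j=1}^m \Gamma(\beta_j)/\Gamma(\alpha_j)$. Throughout, the hypothesis that no $\alpha_j$ or $\beta_j$ is a non-positive integer ensures that all gamma values in the final expression are finite and non-zero, so the limit is indeed a finite non-zero number.
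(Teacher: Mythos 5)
The paper does not prove \autoref{thm:productgamma}; it states it as a classical result and cites \cite{whittaker1996analysis, chamberland2013gamma}. So there is no in-paper argument to compare against, but your proof is in essence the standard textbook one: telescope the partial products via the rising-factorial identity, extract the fixed factor $\prod_j \Gamma(\beta_j)/\Gamma(\alpha_j)$, and show the tail $\prod_j\Gamma(\alpha_j+N)/\Gamma(\beta_j+N)\to 1$ using the Stirling asymptotic $\Gamma(z+N)/\Gamma(w+N)\sim N^{z-w}$ together with the balance condition. That part is correct, and the derivation of $c=1$, $m=m'$ from $r(k)\to 1$ is also fine.

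One point is stated imprecisely. The general principle you invoke --- that $\prod_k r(k)$ converges to a finite non-zero limit only if $\sum_k(r(k)-1)$ converges --- is not true for arbitrary complex sequences, and its parenthetical ``absolute'' variant is worse still (for example $r(k)=1+(-1)^k/k$ gives a convergent product with $\sum_k|r(k)-1|=\infty$). For failure of plain convergence take $r(k)=\eu^{(-1)^k/\sqrt{k}}$: then $\sum_k\log r(k)$ converges, so the product converges to a non-zero limit, yet $r(k)-1=(-1)^k/\sqrt{k}+\tfrac{1}{2k}+O(k^{-3/2})$ and $\sum_k(r(k)-1)$ diverges. The correct criterion is that, once $r(k)\to 1$, the product converges to a non-zero limit if and only if $\sum_k\log r(k)$ converges, and this is equivalent to convergence of $\sum_k(r(k)-1)$ precisely when $\sum_k|r(k)-1|^2<\infty$. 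In your setting this extra hypothesis is automatic, since the expansion gives $r(k)-1=\bigl(\sum_j\alpha_j-\sum_j\beta_j\bigr)/k+O(1/k^2)$; the $1/k$ coefficient is then forced to vanish and $\sum_j\alpha_j=\sum_j\beta_j$ follows as you intend. So the conclusion is right; only the supporting lemma needs to be phrased more carefully.
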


With \autoref{thm:productgamma} in mind, it is useful to introduce the following terminology for shift quotients.
We call a rational function \(r(k)\) (as above) \emph{harmonious}  if \(r(k)\) satisfies the assumptions \(c=1\), \(m=m'\), and \(\sum_j \alpha_j = \sum_j \beta_j\).
From \autoref{thm:productgamma}, it is immediately apparent that a hypergeometric sequence \(\seq{u_n}{n}\) with shift quotient \(r\) converges to a finite non-zero limit only if \(r\) is harmonious.

Related to the assumptions in \autoref{thm:productgamma} (and \autoref{prop:reductiontogamma2} below), when considering Membership and Threshold we can assume without loss of generality that the roots \(\alpha_1,\ldots, \alpha_m\) of the coefficient \(q\) (as in \eqref{eq:rec}) are neither zero nor negative integers. For otherwise, a hypergeometric sequence eventually hits zero and is identically zero thereafter.
In \cite{nosan2022membership}, Nosan et al.\ establish \cref{prop:reductiontogamma2,prop:reduction} for rational parameters.
The proof of \autoref{prop:reductiontogamma2} given in \appref{app:turing} is all but identical to the proof of Proposition 2 in \cite{nosan2022membership}.
\begin{restatable}{proposition}{propreductiontogamma} \label{prop:reductiontogamma2}
Let \(\seq{u_n}{n}\) be a hypergeometric sequence whose shift quotient is given by a ratio of two polynomials with real coefficients.
For such sequences, the Membership and Threshold Problems are both Turing-reducible to the following decision problem.
Given $d\in\N$, $\alpha_1,\ldots,\alpha_d\in\mathbb{C}\setminus\Z_{<0}$ (the roots of some \(P(x)\in\R[x]\)), and $\beta_1,\ldots,\beta_d\in\mathbb{C}\setminus\Z_{<0}$ (the roots of some \(Q(x)\in\R[x]\)),
    determine whether
    \begin{equation*}\label{eq:monomialgamma2}
        \frac{\Gamma(\beta_1)\cdots\Gamma(\beta_d)}{\Gamma(\alpha_1)\cdots\Gamma(\alpha_d)} = t
    \end{equation*}
for \(t\in\Q\setminus 0\).
\end{restatable}
To be absolutely clear, the fact that \(t\in\Q\) is non-zero in \autoref{prop:reductiontogamma2} follows directly from the infinite product in \autoref{thm:productgamma} converging to a non-zero limit.

Our (un)conditional decidability results for the Membership Problem follow from the next proposition.  This proposition can be deduced from the work in \cite{nosan2022membership} (a straightforward proof is given in \appref{app:turing}).

\begin{restatable}{proposition}{propreduction} \label{prop:reduction}
For hypergeometric sequences, decidability of the Membership Problem Turing-reduces to that of the Threshold Problem.
\end{restatable}

\subparagraph*{Number Fields.}

We recall standard results for quadratic fields below (cf.\ \cite[Chapter 3]{stewart2016algebraic}).
A number field \(K\) is \emph{quadratic} if \([K:\Q]=2\).
A field \(K\) is quadratic if and only if there is a square-free integer \(d\) such that \(K = \Q(\sqrt{d})\).
Further, a quadratic field \(\Q(\sqrt{d})\) is \emph{imaginary} if \(d<0\).
\begin{theorem} \label{thm:quadratic}
Suppose that \(d\in\Z\) is  square-free.
Then the algebraic integers of \(\Q(\sqrt{d})\) are given by \(\Z[\sqrt{d}]\) if \(d \not\equiv 1 \pmod{4}\)  or \(\Z[1/2 + \sqrt{d}/2]\) if \(d\equiv 1 \pmod{4}\).
\end{theorem}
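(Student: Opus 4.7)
The plan is to characterise the algebraic integers of $\Q(\sqrt{d})$ directly from the definition: namely an element is an algebraic integer precisely when its monic minimal polynomial over $\Q$ has integer coefficients. Every element of $\Q(\sqrt{d})$ can be written uniquely as $\alpha = a + b\sqrt{d}$ with $a,b\in\Q$. If $b=0$ then $\alpha=a\in\Q$ and being an algebraic integer is equivalent to $a\in\Z$, which is subsumed by either claimed ring. So I assume $b\neq 0$ throughout, in which case the minimal polynomial of $\alpha$ is $x^2 - 2a x + (a^2 - b^2 d)$.

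Next I extract the integrality constraints. The condition $2a\in\Z$ lets me write $a=u/2$ for some $u\in\Z$. For the constant term, I would argue that $a^2 - b^2 d\in\Z$ forces $4b^2 d\in\Z$; writing $b=p/q$ in lowest terms, this means $q^2\mid 4d$, and squarefreeness of $d$ collapses this to $q\mid 2$, so $b=v/2$ for some $v\in\Z$. The integrality of $(u^2 - v^2 d)/4$ then reduces to the single congruence
\begin{equation*}
u^2 \equiv v^2 d \pmod{4}.
\end{equation*}

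Now I carry out a parity case analysis. If $u$ and $v$ are both even, then $a,b\in\Z$ and $\alpha\in\Z[\sqrt{d}]$. If $u$ is even and $v$ is odd, then $v^2 d\equiv d\pmod 4$ must vanish mod $4$, which together with $d$ squarefree is impossible; similarly the case $u$ odd, $v$ even is impossible. Finally, if both $u$ and $v$ are odd, the congruence becomes $d\equiv 1\pmod 4$. Thus when $d\not\equiv 1\pmod 4$ we recover exactly $\Z[\sqrt{d}]$, while when $d\equiv 1\pmod 4$ we pick up additionally the elements $(u+v\sqrt{d})/2$ with $u,v$ both odd.

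To finish in the case $d\equiv 1\pmod 4$, I would set $\omega := (1+\sqrt{d})/2$ and verify that $\omega$ is itself an algebraic integer by checking it satisfies $x^2 - x - (d-1)/4 = 0$, which has integer coefficients since $(d-1)/4\in\Z$. The additive group $\Z + \Z\omega$ then contains both $\Z[\sqrt{d}]$ (as $\sqrt{d} = 2\omega - 1$) and every element of the form $(u+v\sqrt{d})/2$ with $u,v$ odd, so this group is exactly the ring of integers. Since $\omega$ satisfies a monic quadratic over $\Z$, the ring $\Z[\omega]$ coincides with $\Z + \Z\omega$, yielding the stated form. The delicate point in the argument is the denominator control in the step $q\mid 2$, which is exactly where squarefreeness of $d$ is used; the remainder is a mechanical parity analysis modulo $4$.
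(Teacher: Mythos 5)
Your proof is correct, and it is the standard textbook argument: write $\alpha = a + b\sqrt{d}$, observe that $\alpha$ is an algebraic integer iff the trace $2a$ and norm $a^2 - b^2 d$ are rational integers, use squarefreeness of $d$ to control the denominators of $a$ and $b$ to a factor of $2$, and then run the parity/mod-$4$ case analysis to split into the two cases $d \equiv 1 \pmod 4$ and $d \not\equiv 1 \pmod 4$. Note, however, that the paper does not prove this statement at all: \autoref{thm:quadratic} is presented as a recalled standard fact with a citation to Stewart and Tall (Chapter~3), so there is no in-paper proof to compare against. Two very small remarks on your write-up: the derivation of $4b^2 d \in \Z$ uses both integrality conditions (you need $u^2 = 4a^2 \in \Z$ together with $a^2 - b^2 d \in \Z$), and squarefreeness of $d$ is actually invoked twice, once for the denominator control $q \mid 2$ and once again to rule out $d \equiv 0 \pmod 4$ in the mixed-parity cases; neither affects correctness.
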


We include the following straightforward lemma for ease of reference. %
\begin{lemma} \label{lem:alg2rat}
Let \(\mathbb{L}/\mathbb{Q}\) be a finite Galois extension.  Suppose that \({\mathcal{P}}\in\mathbb{L}(X_1,\ldots, X_m)\) is a polynomial such that \({\mathcal{P}}(s_1,\ldots, s_m)=0\) with \((s_1,\ldots, s_m)\in\mathbb{C}^m\).
Then there is a polynomial \(\mathcal{Q} \in \mathbb{Q}(X_1,\dots, X_m)\) such that \(\mathcal{Q}(s_1,\ldots, s_m)=0\).
\end{lemma}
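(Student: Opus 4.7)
The plan is to build \(\mathcal{Q}\) by taking a suitable Galois-invariant product of conjugates of \(\hat{\mathcal{P}}\). Write \(G := \Gal(\mathbb{L}/\mathbb{Q})\), and for each \(\sigma \in G\) let \(\sigma(\hat{\mathcal{P}})\) denote the polynomial obtained from \(\hat{\mathcal{P}}\) by applying \(\sigma\) to each of its coefficients (while leaving the indeterminates \(X_1,\ldots, X_m\) fixed). I would then set
\begin{equation*}
\mathcal{Q}(X_1,\ldots, X_m) := \prod_{\sigma \in G} \sigma(\hat{\mathcal{P}})(X_1,\ldots, X_m).
\end{equation*}

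First I would verify that \(\mathcal{Q}\) has rational coefficients. For every \(\tau \in G\), the map \(\sigma \mapsto \tau\sigma\) is a bijection of \(G\), so applying \(\tau\) coefficientwise to \(\mathcal{Q}\) merely permutes the factors and leaves \(\mathcal{Q}\) unchanged. Hence each coefficient of \(\mathcal{Q}\) is fixed by every element of \(G\) and, because \(\mathbb{L}/\mathbb{Q}\) is Galois, lies in \(\mathbb{L}^G = \mathbb{Q}\). Next I would note that the identity element of \(G\) is among the indices, so \(\hat{\mathcal{P}}\) is one of the factors of \(\mathcal{Q}\); since \(\hat{\mathcal{P}}(s_1,\ldots, s_m) = 0\), it follows immediately that \(\mathcal{Q}(s_1,\ldots, s_m) = 0\).

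Finally I would remark that \(\mathcal{Q}\) is non-zero as a polynomial (being a finite product of non-zero polynomials over the integral domain \(\mathbb{L}[X_1,\ldots, X_m]\)), so \(\mathcal{Q} \in \mathbb{Q}[X_1,\ldots, X_m]\) furnishes a genuine polynomial witness of the desired form. There is no real obstacle here: the only subtlety is keeping straight that the Galois action is applied to coefficients rather than indeterminates, which is what guarantees both the invariance of \(\mathcal{Q}\) under \(G\) and the persistence of the vanishing at \((s_1,\ldots, s_m)\).
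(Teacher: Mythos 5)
Your proof is correct and follows essentially the same route as the paper: both form the norm \(\mathcal{Q} = \prod_{\sigma\in G}\sigma(\hat{\mathcal{P}})\) with \(\sigma\) acting on coefficients, observe that \(G\)-invariance forces the coefficients into \(\mathbb{Q}\), and note that the identity factor \(\hat{\mathcal{P}}\) makes \(\mathcal{Q}\) vanish at \((s_1,\ldots,s_m)\). Your closing remark that \(\mathcal{Q}\) is non-zero (as a product of non-zero polynomials over an integral domain) is a small but welcome addition that the paper leaves implicit.
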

\begin{proof}
Let \({\mathcal{P}} = \sum_{(t_1,\ldots, t_m)} c_{(t_1,\ldots, t_m)} X_1^{t_1}X_2^{t_2} \cdots X_m^{t_m} \) and for each \(\sigma\in G\) (the Galois group of \(\mathbb{L}/\mathbb{Q}\)) let 
\begin{equation*}
\sigma({\mathcal{P}}) = \sum_{(t_1,\ldots, t_m)} \sigma(c_{(t_1,\ldots, t_m)}) X_1^{t_1}X_2^{t_2} \cdots X_m^{t_m}.
\end{equation*}

Let \(\mathcal{Q} = \operatorname{N}_{\mathbb{L}/\mathbb{Q}}({\mathcal{P}}) := \prod_{\sigma\in G} \sigma({\mathcal{P}})\).
It is clear that each of the coefficients of the polynomial \(\mathcal{Q}\) is rational since the coefficients are invariant under the action of the group \(G\).
Further,
	\begin{equation*}
		\mathcal{Q}(s_1,\ldots, s_m) = {\mathcal{P}}(s_1,\ldots, s_m) \prod_{\sigma\in G\setminus{e_G}} \sigma({\mathcal{P}}) (s_1,\ldots, s_m) = 0,
	\end{equation*}
as desired.
\end{proof}

\subparagraph*{Transcendental Number Theory.}

The transcendence degree of a field extension is a measure of the size of the extension.
In fact, for finitely generated extensions of \(\mathbb{L}/\Q\) (such as those that we consider), the transcendence degree indicates the largest cardinality of an algebraically independent subset of \(\mathbb{L}\) over \(\Q\).
For a field extension \(\mathbb{L}/\Q\), a subset \(\{\xi_1,\ldots, \xi_n\}\subset \mathbb{L}\) is \emph{algebraically independent} over \(\Q\) if for each polynomial 
	\(	P(X_1,\ldots, X_n)\in\Q[X_1,\ldots, X_n] \)
we have that \(P(\xi_1,\ldots, \xi_n)=0\) only if \(P\) is identically zero.

It is useful to recall  the Gelfond--Schneider Theorem that establishes the transcendentality of \(\alpha^\beta\) for algebraic numbers \(\alpha\) and \(\beta\) except for the cases where \(\alpha=0,1\) or \(\beta\) is rational.

Schanuel's conjecture is a unifying prediction in transcendental number theory.
If Schanuel's conjecture is true, then it generalises several of the principal
results in transcendental number theory such as: the Gelfond--Schneider Theorem, the Lindemann--Weierstrass Theorem, and
Baker’s theorem (cf.\ \cite{lang1966introduction, baker1975, Waldschmidt2006}).
The conjecture makes the following prediction: for \(\xi_1, \ldots, \xi_n\) rationally linearly independent complex numbers, there is a subset of 
		\(\{\xi_1, \ldots, \xi_n, \eu^{\xi_1}, \ldots, \eu^{\xi_n}\}\)
of size at least \(n\) that is  algebraically independent over \(\Q\).

\begin{conjecture}[Schanuel]
	Suppose that \(\xi_1, \ldots, \xi_n\in\mathbb{C}\) are linearly independent over the rationals \(\Q\).
	Then the transcendence degree of the field extension 
		\( \Q(\xi_1,\ldots,  \xi_n, \eu^{\xi_1}, \ldots, \eu^{\xi_n})\)
over \(\Q\) is at least \(n\).
\end{conjecture}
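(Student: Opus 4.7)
The final statement is Schanuel's Conjecture, a celebrated open problem in transcendental number theory. I should flag at the outset that no proof is known, and so any proof proposal is necessarily speculative; what follows is the shape of the most natural attempt, together with a candid statement of why the paper treats it as a hypothesis rather than a theorem.

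The plan would be induction on $n$. For $n=1$, a single nonzero $\xi_1 \in \CC$ either is itself transcendental (so $\operatorname{tr.deg}_\QQ \QQ(\xi_1, \eu^{\xi_1}) \ge 1$ trivially) or is a nonzero algebraic number, in which case the Hermite--Lindemann theorem asserts the transcendence of $\eu^{\xi_1}$, again giving transcendence degree at least one. More broadly, the Lindemann--Weierstrass theorem settles the conjecture when $\xi_1,\ldots,\xi_n$ are all algebraic, and Baker's theorem handles the case in which the $\xi_i$ are $\QQ$-linearly independent logarithms of algebraic numbers. These known results form the natural base of the induction.

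For the inductive step, I would assume the conjecture for any $n-1$ rationally independent complex numbers and try to show that adjoining $\xi_n$ and $\eu^{\xi_n}$ to $\QQ(\xi_1,\ldots,\xi_{n-1}, \eu^{\xi_1},\ldots,\eu^{\xi_{n-1}})$ must strictly increase the transcendence degree. The tradition in transcendence theory would suggest the following: suppose for a contradiction that both $\xi_n$ and $\eu^{\xi_n}$ are algebraic over that field; construct an auxiliary polynomial in $2n$ variables that vanishes to high order along the graph of the exponential at many points in the $\ZZ$-lattice generated by $\xi_1,\ldots,\xi_n$; extract a nontrivial algebraic relation from a zero estimate or interpolation determinant; and, finally, descend to a $\QQ$-linear relation among $\xi_1,\ldots,\xi_n$ contradicting the rational independence hypothesis.

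The main, and in practice fatal, obstacle is that every known auxiliary-function technique in transcendence theory exploits crucially the algebraicity of some of the inputs, either directly (as in the Lindemann--Weierstrass argument) or through height bounds on the linear forms involved (as in Baker's method). Once the $\xi_i$ are arbitrary transcendental numbers, the classical machinery produces no workable auxiliary polynomial, and no alternative approach is known to recover Schanuel's prediction. Indeed, even the single consequence that $\eu$ and $\pi$ are algebraically independent (taking $\xi_1 = 1$, $\xi_2 = 2\pi \iu$) is itself open. For this reason I do not expect the induction sketched above to close, and the paper sensibly adopts the statement as a hypothesis rather than attempting a proof, using it in the later sections to drive the conditional decidability results.
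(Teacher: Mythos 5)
You are right that Schanuel's Conjecture is an open problem: the paper does not prove it, but merely states it and uses its truth as a hypothesis in the conditional decidability results of \autoref{sec:schanuel}. Your discussion of why the inductive/auxiliary-function strategy fails beyond the algebraic and Baker-type cases is accurate, and your conclusion that the paper must (and does) treat the statement as an assumption rather than a theorem is exactly correct.
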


\section{Hypergeometric Sequences with Quadratic Parameters} \label{sec:maingauss}

As an appetiser to the proofs of \cref{thm:mainquadratic,thm:mainclass},
we introduce our approach by first establishing decidability of the Threshold Problem for hypergeometric sequences with Gaussian integer parameters (\autoref{prop:maingauss} below).
We also include a worked example in \autoref{ex:wkexample}.
Recall that the \emph{Gaussian integers} \(\Z[\iu]\) are those complex numbers of the form \(a+b\iu\) for which \(a,b\in\Z\).

\begin{proposition} \label{prop:maingauss}
The  Threshold Problem for hypergeometric sequences whose polynomial coefficients are monic and split over \(\Q(\iu)\) are decidable.
\end{proposition}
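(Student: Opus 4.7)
The plan is to reduce the two problems to a polynomial identity test in the algebraically independent constants $\pi$ and $\eu^{\pi}$. By \autoref{lem:MPreduction} we may assume the shift quotient $r(k) := q(k)/p(k)$ converges to a limit of modulus $1$, so \autoref{prop:reductiontogamma2} reduces both problems to deciding whether
\[
\tau \;:=\; \prod_{j=1}^{d}\frac{\Gamma(\beta_j)}{\Gamma(\alpha_j)} \;=\; t.
\]
By \autoref{thm:quadratic} (applied with $d = -1$) every parameter $\alpha_j,\beta_j$ is a Gaussian integer. Since $p,q \in \Z[x]$ have real coefficients, the non-real parameters on each side partition into complex-conjugate pairs $a \pm b\iu$ with $b\in\Z\setminus\{0\}$, while the remaining parameters are positive integers contributing positive integer $\Gamma$-values.

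I then apply \autoref{lem:pairproduct} to each non-real pair with $\rho = a \in \Z$ and $w = b\iu$; the hypotheses hold because $a \pm b\iu$ is non-real and $b\iu$ is neither integer nor half-integer. Combined with the rising/falling factorial identity and the identity $\sin(\pi b\iu) = \iu\sinh(\pi b)$, a short calculation yields
\[
\Gamma(a + b\iu)\Gamma(a - b\iu) \;=\; \lambda_{a,b} \cdot \frac{\pi}{\sinh(\pi b)},
\]
where $\lambda_{a,b} \in \Q$ is effectively computable (the imaginary unit cancels between the two conjugate terms, leaving a rational multiplier). Since $\sinh(\pi b) = \tfrac{1}{2}((\eu^{\pi})^b - (\eu^{\pi})^{-b}) \in \Q(\eu^{\pi})$, collecting all contributions gives
\[
\tau \;=\; A \cdot \pi^{k} \cdot R(\eu^{\pi}),
\]
with $A \in \Q$, $k\in\Z$, and $R \in \Q(x)$ effectively computable from the input recurrence, and where $k$ records the signed difference between the numbers of non-real conjugate pairs on the numerator and denominator sides.

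Clearing denominators, the equation $\tau = t$ becomes a polynomial identity $P(\pi,\eu^{\pi}) = 0$ with rational coefficients. Nesterenko's theorem on modular functions asserts that $\pi$ and $\eu^{\pi}$ are algebraically independent over $\Q$, so this identity holds if and only if $P$ is the zero polynomial, a routine check on the coefficients of $P$. I expect the main obstacle to be the bookkeeping for \autoref{lem:pairproduct}: verifying carefully that $\lambda_{a,b}$ is rational rather than merely algebraic (so that no further appeal to \autoref{lem:alg2rat} is necessary), and correctly determining the exponent $k$ from the structure of the roots of $p$ and $q$.
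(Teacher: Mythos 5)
Your proposal is correct and takes essentially the same route as the paper: reduce via \autoref{prop:reductiontogamma2} and \autoref{lem:pairproduct} to an identity in $\pi$ and $\eu^{\pi}$, then invoke Nesterenko's algebraic independence (the paper unpacks your single ``check whether $P$ is the zero polynomial'' into an explicit four-way case analysis on whether $\ell=0$ and whether $f(\eu^\pi)/g(\eu^\pi)$ is rational, but the content is identical). Your worry about $\lambda_{a,b}$ resolves in your favour: the rising/falling factorials contribute $\prod_k(k^2+b^2)\in\Q$ and $\Gamma(b\iu)\Gamma(-b\iu)=\pi/(b\sinh(\pi b))$ with the $\iu$'s cancelling, which is exactly why the paper's equation~\eqref{eq:gausstest} can assert $\theta\in\Q$ even though \autoref{lem:pairproduct} only promises an algebraic multiplier.
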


\begin{proof}
By \autoref{thm:productgamma} and \autoref{prop:reductiontogamma2}, we need only consider hypergeometric sequences with harmonious shift quotients.
Since the polynomial coefficients \(p,q\in\Z[x]\) are monic, the roots and poles of the shift quotient are integers in \(\Q(i)\); that is to say, they lie in \(\Z[i]\). %
It follows from \autoref{lem:pairproduct} that each such instance \((\seq{u_n}{n}, t)\) of the Threshold Problem reduces to testing an equality of the form
	\begin{equation} \label{eq:gausstest}
		\theta \pi^{\ell} \prod_{m}  (\eu^{b_m \pi} - \eu^{-b_m \pi})^{\varepsilon_m} = t.
	\end{equation}
Here \(\theta\) is rational and non-zero, \(\ell\in\Z\), each pair \(\Gamma(a_m+b_m\iu)\Gamma(a_m-b_m\iu)\) from \autoref{prop:reductiontogamma2} contributes a term \( (\eu^{b_m \pi} - \eu^{-b_m \pi})^{\varepsilon_m}\) in the finite product, and \(\varepsilon_m = \pm 1\).

We break the remainder of the proof into several subcases.
Without loss of generality, we can assume that not all the roots and poles of \(r\) are rational integers, for otherwise testing \eqref{eq:gausstest}     reduces to the decidable task of testing equality between two rational numbers.

We continue under the assumption that not all the roots and poles of \(r\) are rational integers.
Let us now consider the product in \eqref{eq:gausstest}. %
Up to multiplication by a rational, we can write the left-hand side of \eqref{eq:gausstest} in the form
	\begin{equation*}
	 \theta \pi^{\ell} \prod_{m} (\eu^{b_m \pi} - \eu^{-b_m \pi})^{\varepsilon_m} 
		 = \theta \pi^{\ell} \frac{ f(\eu^{\pi}) }{ g(\eu^{\pi}) }
	\end{equation*}
where \(f,g\in\Q[X]\) are non-trivial polynomials.
Observe that \(\eu^{\pi} = (\eu^{\pi\iu})^{-\iu} = (-1)^{-\iu}\);
thus, by the Gelfond--Schneider theorem, \(\eu^{\pi}\) is transcendental.
We break the remainder of the proof into two cases depending on the rationality of \(f(\eu^\pi)/g(\eu^\pi)\).

Suppose that \(f(\eu^\pi)/g(\eu^\pi)\in\Q\).
There are two further subcases to consider:
if \(\ell=0\), then, once again, the equality test \eqref{eq:gausstest} reduces to deciding whether two rationals are equal; and
if \(\ell\neq 0\), then the equality test \eqref{eq:gausstest} reduces to testing whether \(\pi^\ell\) is equal to a given rational number, which cannot hold for then \(\pi\) is necessarily algebraic.

All that remains is to consider the case where \(f(\eu^\pi)/g(\eu^\pi)\not\in\Q\), which we again split into two subcases.
If \(\ell=0\), then it is trivial to see that \eqref{eq:gausstest} cannot hold as the right-hand side is rational.
If \(\ell\neq 0\) and we assume, for a contradiction, that \eqref{eq:gausstest} holds, then
				a simple rearrangement of \eqref{eq:gausstest} shows that there is a non-trivial polynomial \(\mathcal{P}\in\Q[X,Y]\) such that \(\mathcal{P}(\pi,\eu^\pi)=0\).
				This contradicts Nesterenko's theorem \cite{nesterenko1996modular} that \(\pi\) and \(\eu^\pi\) are algebraically independent.
We have dispatched each of the subcases and conclude the desired result.
\end{proof}

\begin{example} \label{ex:wkexample}
Suppose that \(\seq[\infty]{u_n}{n=0}\) is the hypergeometric sequence defined by
	\begin{equation*}
		u_n = \frac{n^2-4n+5}{n^2-4n+13} u_{n-1} \text{ with } u_0=1.
	\end{equation*}
For the Threshold Problem, let us consider the problem instance \((\seq{u_n}{n},t)\) with \(t\in\Q\).
First, we evaluate the associated infinite product:
\begin{equation*}
	\prod_{k=0}^\infty \frac{k^2-4k+5}{k^2-4k+13} = \frac{\Gamma(-2-3\iu)\Gamma(-2+3\iu)}{\Gamma(-2-\iu)\Gamma(-2+\iu)} 
	= \frac{\sinh(\pi)}{39\sinh(3\pi)} = \frac{\eu^{3\pi}(\eu^{2\pi} -1)}{39 \eu^{\pi} (\eu^{6\pi} -1)}.
\end{equation*}
Second, as directed by the proof of \autoref{prop:maingauss}, 
decidability of the Threshold Problem in this instance reduces to determining whether the following  equality holds: 
	\begin{equation*}
		\frac{\eu^{3\pi}(\eu^{2\pi} -1)}{39 \eu^{\pi} (\eu^{6\pi} -1)}=t.
	\end{equation*} 

A simple rearrangement shows that if the above equality holds, then there is a non-trivial polynomial \(p\in\Q[X]\) such that \(p(\eu^{\pi})=0\), from which we deduce that \(\eu^{\pi}\) is algebraic.
However, by the Gelfond--Schneider theorem, \(\eu^{\pi}\) is transcendental.
We have reached a contradiction and deduce that the aforementioned equality cannot hold.

Finally, as described in \autoref{prop:reductiontogamma2}, the Threshold Problem reduces to an exhaustive search of a computable number of initial terms in the sequence \(\seq{u_n}{n}\).
As an aside, by \autoref{prop:reduction}, we also obtain decidability of the Membership Problem for problem instances \((\seq{u_n}{n},t)\).
\end{example}

\begin{remark} \label{rem:target}
	Subject to appropriate changes and by employing \autoref{lem:alg2rat}, we can extend the result in \autoref{prop:maingauss} from instances  of the Membership and Threshold Problems \((\seq{u_n}{n}, t)\) with \(u_0, t\in\Q\) to problem instances with \(u_0, t\in\mathbb{L}(\pi, \eu^{\pi})\) where \(\mathbb{L}\) is any finite Galois extension of \(\Q\).
	This extension similarly holds for \autoref{thm:mainquadratic} (below).
\end{remark}

\begin{remark} \label{rem:famousrings}
Analogous decision procedures to the proof of \autoref{prop:maingauss} also hold for other famous rings of integers: the Eisenstein, Kummer, and Kleinian integers.
Recall that
the \emph{Eisenstein integers} are the elements of  \(\Z[\zeta_3] = \{a+b\zeta_3 : a,b\in\Z\}\) where \(\zeta_3 := \eu^{2\pi\iu/3}\).
Similarly, the \emph{Kummer integers} are the elements of \(\Z[\sqrt{-5}] = \{a+b\sqrt{-5} : a,b\in\Z\}\).
Finally, the \emph{Kleinian integers} are the elements of \(\Z[\mu] = \{a+b\mu : a,b\in\Z\}\) where \(\mu = -1/2 + \sqrt{-7}/2\).
\end{remark}

The claims for decidability in \autoref{rem:famousrings}, follow from the next theorem.
We establish decidability of the Threshold Problem for hypergeometric sequences whose parameters are drawn from the ring of integers of an imaginary quadratic number field.
\begin{theorem} \label{thm:mainquadratic}
The Threshold Problem for hypergeometric sequences whose polynomial coefficients are monic and split over an imaginary quadratic number field is decidable.
\end{theorem}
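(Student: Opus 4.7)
The plan is to adapt the argument for \autoref{prop:maingauss} to an arbitrary imaginary quadratic number field $K = \Q(\sqrt{-n})$, where $n$ is a positive square-free integer. By \autoref{lem:MPreduction}, \autoref{thm:productgamma}, and \autoref{prop:reductiontogamma2}, decidability reduces to testing equalities of the form $\prod_j \Gamma(\beta_j)/\Gamma(\alpha_j) = t$ with $t\in\Q\setminus\{0\}$ and all $\alpha_j,\beta_j$ algebraic integers of $K$. Invoking \autoref{thm:quadratic} to describe $\mathcal{O}_K$, each parameter has the shape $a + b\sqrt{-n}$ with $a,b$ either both integers or (when $-n \equiv 1\pmod 4$) both half-integers. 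Purely real parameters are rational integers and contribute only a rational factor via the rising/falling factorial identities; the remaining roots and poles of the shift quotient group into complex conjugate pairs $\rho \pm b\sqrt{n}\,\iu$ with $\rho$ an integer or half-integer.

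Next, I would apply \autoref{lem:pairproduct} to each conjugate pair with $w = b\sqrt{n}\,\iu$. Since $e^{\pi w \iu} = e^{-\pi b\sqrt{n}}$, each pair contributes, up to multiplication by an algebraic number, a factor of $\pi$ times a rational function in $e^{\pi b\sqrt{n}}$. Setting $\eta := e^{\pi\sqrt{n}/2}$ (the square root absorbs the half-integer exponents that arise in the $-n\equiv 1\pmod 4$ case), the equality test collapses to
\begin{equation*}
\theta\,\pi^{\ell}\,\frac{f(\eta)}{g(\eta)} = t,
\end{equation*}
for some $\theta\in\Q$, $\ell\in\Z$, and nontrivial polynomials $f,g$ whose coefficients lie in some fixed number field $\mathbb{L}$. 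After clearing denominators and taking the norm $\operatorname{N}_{\mathbb{L}/\Q}$ via \autoref{lem:alg2rat}, one may assume $f,g\in\Q[X]$.

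The proof then proceeds by case analysis modelled on \autoref{prop:maingauss}. By the Gelfond--Schneider Theorem, $\eta$ is transcendental; and by Nesterenko's theorem \cite{nesterenko1996modular}, applied at the imaginary quadratic point $\tau = \iu\sqrt{n}$, the numbers $\pi$ and $e^{\pi\sqrt{n}} = \eta^2$ are algebraically independent over $\Q$. A routine argument (splitting any purported annihilator $P(X,Y)$ of $(\pi,\eta)$ into its even and odd parts in $Y$, or using the product $P(X,Y)P(X,-Y)$) upgrades this to algebraic independence of $\pi$ and $\eta$. Now if $f(\eta)/g(\eta)\in\Q$ then $g$ is a rational multiple of $f$, and the test either reduces to comparing two rationals (when $\ell=0$) or to the impossible identity $\pi^{\ell}=$ rational (when $\ell\neq 0$). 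If instead $f(\eta)/g(\eta)\notin\Q$, then $\ell=0$ forces the left-hand side to be irrational, contradicting $t\in\Q$; whereas $\ell\neq 0$ produces, after a simple rearrangement, a nontrivial polynomial in $\Q[X,Y]$ annihilating $(\pi,\eta)$, contradicting algebraic independence.

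The principal novelty beyond \autoref{prop:maingauss} and the main hurdle is the invocation of Nesterenko's algebraic-independence result at a general imaginary quadratic point, which is precisely what permits a uniform treatment across all imaginary quadratic $K$. Everything else (the conjugate pairing, the bookkeeping for half-integer exponents via $\eta = e^{\pi\sqrt{n}/2}$, and the rationalization of coefficients via \autoref{lem:alg2rat}) is a direct adaptation of the Gaussian case.
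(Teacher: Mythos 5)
Your proposal is correct and takes essentially the same route as the paper's proof: both adapt \autoref{prop:maingauss} by pairing conjugate parameters, applying \autoref{lem:pairproduct} (via the reflection and cosine-reflection formulas), and then dispatching the resulting equality test with the Gelfond--Schneider theorem and Nesterenko's algebraic-independence result at a general imaginary quadratic point. Your write-up is somewhat more explicit than the paper's ``mutatis mutandis'' treatment --- notably in unifying the two cases modulo $4$ through the single generator $\eta=\eu^{\pi\sqrt{n}/2}$, in spelling out the upgrade from algebraic independence of $(\pi,\eta^2)$ to that of $(\pi,\eta)$, and in invoking \autoref{lem:alg2rat} to rationalize the coefficients carrying $\sqrt{n}$ --- but the substance is the same.
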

\begin{proof}

Mutatis mutandis, the proof of \autoref{thm:mainquadratic} follows the approach in \autoref{prop:maingauss}.
For the sake of brevity, we shall indicate only the major changes to \autoref{prop:maingauss} here.
Consider the ring of integers of an imaginary quadratic field \(\Q(\sqrt{d})\) where \(-d\in\N\) is square-free.
By \autoref{thm:quadratic}, there are two cases to consider: first, when \(d\not\equiv 1 \pmod{4}\) and second, when \(d\equiv 1 \pmod{4}\).

We concentrate on the changes to the proof of \autoref{prop:maingauss} when \(d\not\equiv 1 \pmod{4}\).
Like before, we can use the recurrence formula to write \(\Gamma(a+b\sqrt{d}) = \theta \Gamma(b\sqrt{d})\) where \(\theta\in\N\).
Thus all that remains is to evaluate products \(\Gamma(b\sqrt{d})\Gamma(-b\sqrt{d})\) of conjugate elements.
By the reflection formula, we have
\begin{equation*}
	\Gamma(b\sqrt{d})\Gamma(-b\sqrt{d}) = - \frac{\pi}{b\sqrt{d}\sin(b\pi\sqrt{d})}
	= \frac{2\pi}{b\sqrt{-d}(\eu^{\pi b\sqrt{-d}}-\eu^{-\pi b\sqrt{-d}} )}.
\end{equation*}
The important update is the product in \eqref{eq:gausstest}. %
In our new setting, the product takes the form 
	\begin{equation*}
		\prod_{m} (\eu^{\pi b\sqrt{-d}}-\eu^{-\pi b\sqrt{-d}})^{\varepsilon_m}.
	\end{equation*}
Observe that \(\eu^{\pi\sqrt{-d}}\) is transcendental (once again by Gelfond--Schneider) and that for each \(-d\in\N\) the numbers \(\pi\) and \(\eu^{\pi\sqrt{-d}}\) are algebraically  independent over \(\Q\) \cite[Corollary 6]{nesterenko1996modular}.
The rest of the proof in this case follows as before.

In the second case where \(d\equiv 1 \pmod{4}\) we must additionally deal with contributions of the form \(\Gamma(b/2 + b\sqrt{d}/2)\Gamma(b/2 - b\sqrt{d}/2)\).
This setting introduces cases where \(2 \nmid b\), which we resolve by repeated application of the recurrence formula and the cosine variant of Euler's reflection formula.
Indeed, we have
\begin{equation*}
	\Gamma(1/2 + b\sqrt{d}/2)\Gamma(1/2 - b\sqrt{d}/2) = \frac{\pi}{\cos(\pi b\sqrt{d}/2)} 
		= \frac{2\pi}{\eu^{\pi b\sqrt{-d}/2} + \eu^{-\pi b \sqrt{-d}/2}},
\end{equation*}
and so we can construct an updated version of the product in \eqref{eq:gausstest}.
This update and analogous arguments for the transcendental properties of \(\eu^{\pi\sqrt{-d}/2}\) let us conclude decidability in this case too.
\end{proof}

\section{Conditional Decidability Subject to Schanuel's conjecture} \label{sec:schanuel}

In this section we shall give a generalisation (\autoref{thm:mainclass}) of the decidability results in \autoref{sec:maingauss}.
The result applies to a strictly larger class of hypergeometric sequences; however, we sacrifice unconditional decidability.
Briefly, termination of the decidability procedure in \autoref{thm:mainclass} is dependent on the truth of Schanuel's conjecture (further details are given in \autoref{rem:MacintyreWilkie}).
The motivation for studying reachability problems for this larger class arises from interest in the literature for limits of hypergeometric sequences with unnested radical and cyclotomic parameters (see \autoref{ssec:4.2}).

Let us first describe the class of hypergeometric sequences for which we establish conditional decidability of the Membership and Threshold Problems.
This will require us to introduce some notations and \autoref{ass:symmetry} below.

Consider \(\mathcal{R}_f\) the multiset of roots of a given monic polynomial \(f\in\Z[x]\) and \(\mathcal{V}_f\) the multiset of irrational roots of \(f\).
We define the graph \(\mathcal{G}_f :=(\mathcal{V}_f, \mathcal{E}_f)\)
with vertex set \(\mathcal{V}_f\) and edge set \(\mathcal{E}_f\).
Here \(\mathcal{E}_f\) is encoded using an adjacency matrix \(A\) (whose rows and columns are indexed by \(\mathcal{V}_f\)) as follows.
For distinct \(u,v\in\mathcal{V}_f\),
let \(A(u,v)=1\) if \(u = \rho + w\) and \(v = \rho - w\) for some \(\rho \in \tfrac{1}{2}\Z\) and \(w\) an algebraic number not in \(\tfrac{1}{2}\Z\). Otherwise, let \(A(u,v) = 0\).

\begin{assumption} \label{ass:symmetry}

We say that \(f\) has \autoref{ass:symmetry} if \(\mathcal{G}_f\) admits a perfect matching.
\end{assumption}

The motivation for introducing \autoref{ass:symmetry} is as follows:
in combination with \autoref{lem:pairproduct} and subject to the truth of Schanuel's conjecture, we can circumvent certain hard problems concerning the evaluation of gamma products if the input parameters possess certain symmetries.
The main result of this section is a conditional decidability result for the class of hypergeometric sequences whose polynomial coefficients are monic and possess \autoref{ass:symmetry}.

\begin{restatable}{theorem}{thmmainclass} \label{thm:mainclass}
The Threshold Problem for hypergeometric sequences whose polynomial coefficients (as in \eqref{eq:rec}) are both monic and have \autoref{ass:symmetry} is decidable subject to the truth of Schanuel's conjecture.
\end{restatable}

The remainder of this section is structured as follows.
In \autoref{ssec:4.1} we list classes of polynomials with \autoref{ass:symmetry} and highlight straightforward corollaries (\cref{cor:quadraticSC,cor:minimalrationalSC,cor:unnested}) of \autoref{thm:mainclass}.
In \autoref{ssec:4.2} we prove \autoref{thm:mainclass}.

\subsection{Polynomials with \autoref{ass:symmetry}} \label{ssec:4.1}

\subparagraph*{Even Polynomials.}
It is immediate that even polynomials with at least one irrational root possess \autoref{ass:symmetry}.
Suppose that \(f\in\Z[x]\) is a monic even polynomial.  Then \(f(-x)=f(x)\) and so \(\mathcal{G}_f\) admits a perfect matching, as desired.  More generally, we note that a horizontal translation of such a polynomial, say \(\tilde{f}\), for which \(\tilde{f}(\rho-x)=\tilde{f}(\rho+x)\) where \(\rho\in\tfrac{1}{2}\Z\), also has \autoref{ass:symmetry}.

\subparagraph*{Quadratic Polynomials.}
Every irreducible monic quadratic polynomial in \(\Z[x]\) possesses \autoref{ass:symmetry}.  This observation is straightforward: consider an irreducible monic quadratic polynomial \(x^2 + bx + c\in\Z[x]\).  The roots of said quadratic satisfy \(-\tfrac{b}{2} \pm \tfrac{\sqrt{b^2-4c}}{2}\).  We note that \(-\tfrac{b}{2}\in\tfrac{1}{2}\Z\) and \(b^2-4c\neq 0\).

The following corollary is a straightforward consequence of \autoref{thm:mainclass}: we have conditional decidability of the  Threshold Problem for hypergeometric sequences whose parameters are drawn from the rings of integers of quadratic number fields.
\begin{corollary} \label{cor:quadraticSC}
The Threshold Problem for hypergeometric sequences whose polynomial coefficients are monic with irreducible factors that are either linear or quadratic is decidable subject to the truth of Schanuel's conjecture.
\end{corollary}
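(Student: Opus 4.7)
The approach is to show that the hypergeometric sequences described in the corollary fall within the scope of \autoref{thm:mainclass}; that is, I need to verify that any monic polynomial $f\in\Z[x]$ whose irreducible factors over $\Z$ are all linear or quadratic satisfies \autoref{ass:symmetry}. Concretely, the task reduces to exhibiting a perfect matching of the graph $\mathcal{G}_f$ on the multiset $\mathcal{V}_f$ of irrational roots of $f$.

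First I would factor $f$ into its monic irreducible factors over $\Z$. Each linear factor contributes only rational roots, which by definition are not vertices of $\mathcal{G}_f$. For an irreducible quadratic factor $x^2+bx+c \in \Z[x]$, the roots are $-\tfrac{b}{2} \pm \tfrac{1}{2}\sqrt{b^2-4c}$; irreducibility forces $b^2-4c$ to be a non-square integer (possibly negative), so both roots are irrational and thus lie in $\mathcal{V}_f$. Setting $\rho = -b/2 \in \tfrac{1}{2}\Z$ and $w = \tfrac{1}{2}\sqrt{b^2-4c}$, the quantity $w$ is algebraic and not in $\tfrac{1}{2}\Z$ (otherwise $b^2-4c$ would be a perfect square, contradicting irreducibility), so the two roots form an edge in $\mathcal{G}_f$.

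For each irreducible quadratic factor of multiplicity $k$ in $f$, the multiset $\mathcal{V}_f$ contains $k$ copies of each of its two conjugate roots, and any bijective pairing of these copies supplies $k$ edges of $\mathcal{G}_f$. Collecting such edges across all irreducible quadratic factors of $f$ produces a perfect matching of $\mathcal{G}_f$, establishing \autoref{ass:symmetry} for $f$. Applying this verification to both polynomial coefficients $p$ and $q$ of the recurrence \eqref{eq:rec}, \autoref{thm:mainclass} then yields the stated conditional decidability.

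Since the reduction is a transparent combinatorial check, I do not anticipate any real obstacle here: the genuine analytic and transcendence-theoretic content resides in \autoref{thm:mainclass}, and the corollary amounts to the observation that a linear-or-quadratic factorisation automatically supplies the pairing structure required by \autoref{ass:symmetry}.
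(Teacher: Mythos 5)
Your proposal is correct and takes essentially the same approach as the paper. The paper treats the corollary as a near-immediate consequence of \autoref{thm:mainclass}, simply recording (in the preceding paragraph on quadratic polynomials) that an irreducible monic quadratic $x^2+bx+c\in\Z[x]$ has conjugate roots $-\tfrac{b}{2}\pm\tfrac{1}{2}\sqrt{b^2-4c}$ with $-\tfrac{b}{2}\in\tfrac{1}{2}\Z$, and you spell out the same observation together with the (routine) bookkeeping about linear factors, multiplicities, and assembling the pairings into a perfect matching of $\mathcal{G}_f$.
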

We note that the assumption in \autoref{cor:quadraticSC} permits us to draw sequence parameters from the integers of any number of quadratic fields in order to establish conditional decidability of both the Threshold and Membership Problems.
This is in stark contrast to the work in \cite{kenison2023membership} that establishes decidability of Membership for hypergeometric sequences whose parameters are drawn from the integers of a single quadratic field.

\subparagraph*{Algebraic Numbers with Rational Real Part.}
Consider the class \(\mathcal{C}\) of monic irreducible polynomials in \(\Q[x]\) that possess a root with rational real part.
Trivially, a linear polynomial with rational coefficients is always a member of \(\mathcal{C}\) since the single root of the polynomial is rational.
 It is straightforward to see that an irreducible quadratic polynomial \(x^2+bx+c\in\Q[x]\) is in \(\mathcal{C}\) if and  only if \(b^2-4c<0\).
Dilcher, Noble, and Smyth \cite{dilcher2011minimal} achieve a complete classification of the class \(\mathcal{C}\) with the following result.
\begin{theorem}[{\cite[Theorem 1]{dilcher2011minimal}}] \label{thm:minimalrational}
	Let \(f\) be a polynomial of degree at least three.
	Then \(f\in\mathcal{C}\) if and only if \(f(x) = g((x-\rho)^2)\) for some \(\rho\in\Q\) and monic irreducible \(g\in\Q[X]\) that has a negative real root.
	In this case, \(f\) has a root with a rational real part \(\rho\).
\end{theorem}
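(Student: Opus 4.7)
The plan is to split the biconditional and, in each direction, reduce to structural considerations for the translate $\phi(x) := f(x+\rho)$.

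\textbf{Forward direction.} Suppose $f \in \mathcal{C}$ has a root $\alpha = \rho + \iu\gamma$ with $\rho \in \Q$ and $\gamma \in \R \setminus \{0\}$; the case $\gamma = 0$ is excluded because an irreducible polynomial in $\Q[x]$ of degree at least three cannot have a rational root. Let $\phi(x) = f(x+\rho)$, which is monic irreducible in $\Q[x]$ and has both $\iu\gamma$ and $-\iu\gamma$ as roots (the latter since $f \in \R[x]$ is closed under complex conjugation). The key observation is that $\phi(-x)$ also vanishes at $\iu\gamma$. If $\deg\phi$ is even then $\phi(-x)$ is itself monic irreducible, and sharing the root $\iu\gamma$ with $\phi$ forces $\phi(-x) = \phi(x)$, so $\phi$ is an even polynomial. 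If $\deg\phi$ is odd then applying the same argument to $-\phi(-x)$ yields $\phi(x) = -\phi(-x)$, so $\phi(0) = 0$, contradicting irreducibility since $\deg\phi \ge 3$. Thus $\deg f$ is even and $\phi(x) = g(x^2)$ for some monic $g \in \Q[X]$, giving $f(x) = g((x-\rho)^2)$. Irreducibility of $g$ is inherited because any factorisation $g = g_1 g_2$ lifts to $\phi(x) = g_1(x^2)g_2(x^2)$; and evaluating $g(-\gamma^2) = \phi(\iu\gamma) = 0$ exhibits a negative real root of $g$.

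\textbf{Converse direction.} Suppose $f(x) = g((x-\rho)^2)$ with $g \in \Q[X]$ monic irreducible and $\beta_0 < 0$ a real root of $g$. Monicity of $f$ is immediate. For irreducibility, I would invoke the classical criterion that, for an irreducible $g$ with root $\beta$, the polynomial $g(x^2)$ is irreducible over $\Q$ exactly when $\beta$ is not a square in $\Q(\beta)$. Taking $\beta = \beta_0$, the field $\Q(\beta_0)$ embeds into $\R$, whereas $\sqrt{\beta_0}$ is purely imaginary; hence $\sqrt{\beta_0} \notin \Q(\beta_0)$ and the criterion applies. Consequently $g(x^2)$ is irreducible, and its translate $f(x) = g((x-\rho)^2)$ is likewise irreducible. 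A root of $f$ with rational real part $\rho$ is then $\rho + \iu\sqrt{-\beta_0}$.

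\textbf{Main obstacle.} The delicate point is the irreducibility argument in the converse direction. If one prefers not to quote the squares-in-$\Q(\beta)$ criterion, it can be replaced by a direct degree computation: since $\Q(\beta_0) \subset \R$ and $\sqrt{\beta_0} \notin \R$, we have $[\Q(\sqrt{\beta_0}) : \Q(\beta_0)] = 2$, and thus $[\Q(\sqrt{\beta_0}) : \Q] = 2\deg g = \deg(g(x^2))$, forcing $g(x^2)$ to be the minimal polynomial of $\sqrt{\beta_0}$ and hence irreducible. Care is required in handling a general monic irreducible $g$ whose remaining roots may be complex, but privileging the embedding at the guaranteed negative real root $\beta_0$ is what makes the argument go through.
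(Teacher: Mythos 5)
The paper does not prove this statement; it is quoted directly as Theorem 1 of Dilcher, Noble, and Smyth \cite{dilcher2011minimal}, so there is no in-paper proof to compare against. On its own merits, your self-contained argument is correct. The forward direction is sound: after shifting by $\rho$ you use uniqueness of the minimal polynomial to force $\phi(-x)=\pm\phi(x)$, rule out the odd sign because an irreducible polynomial of degree at least three cannot vanish at zero, and extract $g$ monic, irreducible (any factorisation of $g$ would lift to one of $\phi$), with $g(-\gamma^2)=0$ providing the negative real root. The converse direction correctly isolates the only delicate point, irreducibility of $g((x-\rho)^2)$, and your degree computation is exactly what is needed: $\Q(\beta_0)\subset\R$ while $\sqrt{\beta_0}\notin\R$ forces $[\Q(\sqrt{\beta_0}):\Q(\beta_0)]=2$, hence $[\Q(\sqrt{\beta_0}):\Q]=2\deg g=\deg g(x^2)$, so the monic $g(x^2)$ annihilating $\sqrt{\beta_0}$ is its minimal polynomial and is therefore irreducible, and a rational translate of an irreducible polynomial is irreducible. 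Your parenthetical worry in the final paragraph is well placed but already resolved by your own phrasing: the criterion must be evaluated at the negative real root $\beta_0$ specifically, since for other roots of $g$ the field $\Q(\beta)$ need not embed in $\R$ and the argument would not go through. This is a complete and correct proof.
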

The following corollary is key to understanding the connection to \autoref{ass:symmetry} and demonstrates the 2-fold rotational symmetry of the set of roots of a polynomial in \(\mathcal{C}\).
\begin{corollary}[{\cite[Corollary 2]{dilcher2011minimal}}] \label{cor:minimalrational}
 Suppose that \(f\in\mathcal{C}\) has degree at least three.
 The roots of \(f\) that have rational real part have the same real part \(\rho\).
 Further, we have \(f(\rho-x) = f(\rho+x)\).
\end{corollary}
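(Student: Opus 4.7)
The plan is to leverage \autoref{thm:minimalrational} and supplement it with a short ``uniqueness of symmetry centre'' argument. The second assertion is immediate: \autoref{thm:minimalrational} furnishes a factorisation \(f(x) = g((x - \rho)^2)\) with \(g \in \Q[X]\), whence the substitution \(x = \rho \pm y\) yields \(f(\rho + y) = g(y^2) = f(\rho - y)\), establishing the ``Further'' clause.

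For the first assertion, I would pick an arbitrary root \(\alpha\) of \(f\) with rational real part \(\rho^*\) and set \(y := \Im(\alpha)\). Note that \(y \neq 0\): otherwise \(\alpha = \rho^*\) would be a rational root of a monic irreducible polynomial of degree at least three, which is impossible. Since \(f \in \Q[x] \subset \R[x]\), the conjugate \(\bar{\alpha} = \rho^* - \iu y\) is also a root, and so the translated polynomial \(\tilde{f}(x) := f(x + \rho^*)\in\Q[x]\) is monic and irreducible and admits the pair \(\pm \iu y\) of opposite nonzero roots.

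The key auxiliary observation is that a monic irreducible \(p \in \Q[x]\) of degree \(n \geq 2\) possessing a pair of nonzero opposite roots \(\pm\beta\) must satisfy \(p(-x) = p(x)\). Indeed, \(q(x) := (-1)^n p(-x)\) is monic of degree \(n\) and annihilates \(\beta\), so minimality of \(p\) forces \(q = p\); equivalently, \(p(-x) = (-1)^n p(x)\). If \(n\) were odd, we would obtain \(p(0) = 0\) and hence \(x \mid p\), violating irreducibility. Applying this observation to \(\tilde{f}\) gives \(f(\rho^* + x) = f(\rho^* - x)\), so \(f\) is also symmetric about \(\rho^*\).

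To conclude, I would use that a non-constant polynomial has at most one axis of symmetry. Composing the reflections about \(\rho\) and \(\rho^*\) yields the functional identity \(f(x + 2(\rho^* - \rho)) = f(x)\); no non-constant polynomial is periodic, so \(\rho^* = \rho\). The main (and still minor) obstacle here is the even-polynomial lemma in the third paragraph, which is a brief but slightly delicate use of irreducibility; the remainder of the argument is structural and is read off directly from \autoref{thm:minimalrational}.
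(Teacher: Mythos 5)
The paper cites this corollary directly from Dilcher, Noble, and Smyth \cite{dilcher2011minimal} and does not reproduce a proof, so there is no in-paper argument to compare against; what you have written is a reconstruction from the cited \autoref{thm:minimalrational}, and it is correct. The ``Further'' clause is indeed immediate from the factorisation \(f(x) = g((x-\rho)^2)\). For the first clause, your auxiliary observation is sound: since \(\tilde f\) is the minimal polynomial of \(\iu y\) and \((-1)^n\tilde f(-x)\) is a monic degree-\(n\) rational polynomial also vanishing at \(\iu y\), minimality forces \(\tilde f(-x)=(-1)^n\tilde f(x)\), and odd \(n\) would give \(\tilde f(0)=0\), contradicting irreducibility at degree \(n\ge 2\). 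The final step, composing the two reflections to produce a nonzero period when \(\rho^*\neq\rho\) and invoking that a non-constant polynomial cannot be periodic, correctly pins down \(\rho^*=\rho\). One small remark: your proof of the first assertion actually does not need \autoref{thm:minimalrational} at all — it shows directly that \(f\) is symmetric about the real part of any root with rational real part, and uniqueness of the centre then follows; the theorem is only used for the ``Further'' clause (and even that could be re-derived from the same even-polynomial observation). So your argument is slightly more self-contained than a bare citation of the theorem would suggest.
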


The trivial lemma below shows that if an algebraic integer has rational real part, then said real part lies in \(\tfrac{1}{2}\Z\).

\begin{lemma} \label{lem:minimalrational}
Let \(\alpha\) be an algebraic integer with \(\Re(\alpha)\in\Q\).
Then \(\Re(\alpha)\in\tfrac{1}{2}\Z\).
\end{lemma}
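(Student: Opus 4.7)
The plan is to exploit the fact that the complex conjugate $\overline{\alpha}$ is also an algebraic integer, being a Galois conjugate of $\alpha$ over $\Q$ (or directly, because if $\alpha$ satisfies the monic polynomial $f(x)\in\Z[x]$, then applying complex conjugation to $f(\alpha)=0$ shows $\overline{\alpha}$ satisfies the same polynomial). Since the ring of algebraic integers $\overline{\Z}$ is closed under addition, the quantity
\begin{equation*}
\alpha + \overline{\alpha} = 2\Re(\alpha) = 2\rho
\end{equation*}
is again an algebraic integer.

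By hypothesis, $\rho\in\Q$, so $2\rho\in\Q$ as well. A rational number that is simultaneously an algebraic integer must in fact lie in $\Z$ (this is the classical fact that $\overline{\Z}\cap\Q=\Z$, which follows from the observation that the minimal polynomial of a rational number over $\Q$ is linear with integer coefficients precisely when the rational is an integer). Hence $2\rho\in\Z$, which rearranges to $\rho\in\tfrac{1}{2}\Z$, as required.

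There is no genuine obstacle here: the entire argument is a one-line appeal to the closure of $\overline{\Z}$ under addition together with $\overline{\Z}\cap\Q=\Z$. The only mild subtlety is justifying that $\overline{\alpha}\in\overline{\Z}$, but this is immediate from the invariance of $\Z[x]$ under the conjugation action. I would present the lemma in two short sentences matching the two paragraphs above.
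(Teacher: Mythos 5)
Your proof is correct and follows essentially the same argument as the paper's: write $2\Re(\alpha) = \alpha + \overline{\alpha}$, note this is a rational algebraic integer (using closure of $\overline{\Z}$ under addition and conjugation), and invoke $\overline{\Z}\cap\Q=\Z$. The only cosmetic difference is that you spell out why $\overline{\alpha}$ is an algebraic integer, which the paper takes as read.
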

\begin{proof}
Since
\(
		\Re(\alpha) = \tfrac{1}{2}(\alpha + \oalpha)
\),
we deduce that \(\alpha + \oalpha \in\Q\).
Observe that both \(\alpha\) and \(\oalpha\) are algebraic integers and so \(\alpha + \oalpha\) is too due to the closure of the ring of algebraic integers.
We deduce that \(\alpha + \oalpha\in\Z\), from which the desired result follows.
\end{proof}

When we combine the result in \autoref{thm:mainclass} with the observations in \cref{cor:minimalrational,lem:minimalrational} we obtain the following corollary.

\begin{corollary} \label{cor:minimalrationalSC}
The Threshold Problem for hypergeometric sequences whose polynomial coefficients are monic with irreducible factors in \(\mathcal{C}\) is decidable subject to the truth of Schanuel's conjecture.
\end{corollary}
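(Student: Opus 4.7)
The plan is to reduce \autoref{cor:minimalrationalSC} directly to \autoref{thm:mainclass} by verifying that the hypothesis on irreducible factors in \(\mathcal{C}\) forces the overall polynomial coefficients to satisfy \autoref{ass:symmetry}. Concretely, I would let \(p,q\in\Z[x]\) be the monic coefficients from \eqref{eq:rec} and factor their product as \(pq=\prod_i f_i^{e_i}\) with each \(f_i\) irreducible, monic (by Gauss's lemma), and in \(\mathcal{C}\). The target is to exhibit a perfect matching of \(\mathcal{G}_{pq}\); it suffices to produce a perfect matching within each \(\mathcal{G}_{f_i}\) and take the union, replicating each matched pair \(e_i\) times to account for multiplicities in the root multiset.

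The matching for each \(f_i\) splits naturally by degree. Linear factors contribute only rational roots and hence no vertices to \(\mathcal{G}_{f_i}\) at all. Irreducible quadratic factors \(x^2+bx+c\) yield exactly one conjugate pair with half-integer midpoint \(-b/2\), matched by the paragraph on quadratic polynomials in \autoref{ssec:4.1}. The interesting case is \(\deg f_i\ge 3\): here \autoref{cor:minimalrational} supplies a \(\rho\in\Q\) for which \(f_i(\rho-x)=f_i(\rho+x)\), so the full root set is symmetric about \(\rho\). Because \(f_i\in\mathcal{C}\) has at least one root with real part \(\rho\) and every root is an algebraic integer, \autoref{lem:minimalrational} promotes \(\rho\) to \(\tfrac{1}{2}\Z\).

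With \(\rho\in\tfrac{1}{2}\Z\) in hand, I would write each irrational root of \(f_i\) as \(\rho+w\) and observe that \(\rho-w\) is also a root; the two are distinct (otherwise \(w=0\) and the root would equal \(\rho\in\Q\)) and \(w\notin\tfrac{1}{2}\Z\) (otherwise \(\rho+w\in\Q\)). This pairing meets the edge condition in the definition of \(\mathcal{G}_{f_i}\) and yields the required perfect matching on the irrational-root multiset of \(f_i\). Concatenating the matchings across all factors verifies \autoref{ass:symmetry} for \(pq\), after which \autoref{thm:mainclass} delivers the conditional decidability.

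I do not foresee a substantive obstacle: the only nontrivial step is promoting the \emph{rational} centre of symmetry supplied by \autoref{cor:minimalrational} to a half-integer centre, which is precisely what \autoref{lem:minimalrational} accomplishes. In effect, \autoref{cor:minimalrationalSC} is a purely structural consequence of the Dilcher--Noble--Smyth classification of \(\mathcal{C}\) layered onto \autoref{thm:mainclass}.
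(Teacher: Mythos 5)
Your proposal is correct and takes the same route the paper intends: the paper gives no explicit proof of \autoref{cor:minimalrationalSC}, only the one-line remark that it follows by combining \autoref{thm:mainclass} with \autoref{cor:minimalrational} and \autoref{lem:minimalrational}, and your argument is precisely the fleshed-out version of that combination. The only thing worth flagging is a minor framing wrinkle: \autoref{thm:mainclass} requires the two polynomial coefficients \(p\) and \(q\) to \emph{each} satisfy \autoref{ass:symmetry}, whereas you phrase the target as a perfect matching of \(\mathcal{G}_{pq}\). This does no harm here because your construction builds the matching factor-by-factor (and hence restricts to perfect matchings of \(\mathcal{G}_p\) and \(\mathcal{G}_q\) separately), but it would be cleaner to apply the factor-level argument to \(p\) and \(q\) individually rather than to their product.
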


\subparagraph*{Unnested Radicals and Cyclotomic Polynomials.}
We now highlight the class of hypergeometric sequences with parameters determined by unnested radicals and cyclotomic polynomials.
The limits of such sequences are considered in both \cite[pp.\ 753--757]{prudnikov1986integrals} and \cite{dilcher2018gamma}.
In the sequel we use \(\Phi_d\) to denote the \(d\)th cyclotomic polynomial.
We state and prove the following corollary of \autoref{thm:mainclass}.

\begin{corollary} \label{cor:unnested}
 The  Threshold Problem for hypergeometric sequences whose polynomial coefficients have irreducible factors of the form \(x^d - a\) with \(d\in 2\Z\), or \(\Phi_d\) with \(d\in 4\Z\)
 is decidable subject to the truth of Schanuel's conjecture.
\end{corollary}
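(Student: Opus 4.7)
The plan is to reduce the statement to \autoref{thm:mainclass} by showing that every polynomial of the prescribed shape satisfies \autoref{ass:symmetry}. A first observation is that \autoref{ass:symmetry} is preserved under taking products: if $f = gh$ and each of $g,h$ admits a perfect matching on its multiset of irrational roots into pairs of the form $(\rho + w, \rho - w)$, then concatenating the two matchings yields a perfect matching on $\mathcal{V}_f$. Hence it suffices to verify the property on each irreducible factor separately.

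For an irreducible factor $f(x) = x^d - a$ with $d \in 2\Z$, the polynomial is even: $f(-x) = (-x)^d - a = x^d - a = f(x)$. Consequently the multiset of roots is closed under negation. Every irrational root $\alpha$ (necessarily nonzero) pairs with the distinct irrational root $-\alpha$, placing the pair in the required form with $\rho = 0 \in \tfrac{1}{2}\Z$ and $w = \alpha \notin \tfrac{1}{2}\Z$.

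For $f = \Phi_d$ with $4 \mid d$, I would argue that $\zeta \mapsto -\zeta$ is a fixed-point-free involution on the set of primitive $d$th roots of unity. The main number-theoretic step is verifying that $-\zeta$ remains primitive: writing $\zeta = \zeta_d^k$ with $\gcd(k,d) = 1$, one has $-\zeta = \zeta_d^{k+d/2}$, and I would check $\gcd(k + d/2, d) = 1$ as follows. Since $4 \mid d$, the integer $d/2$ is even, so $k + d/2 \equiv k \pmod{2}$ is odd; and for each odd prime $p \mid d$ one has $p \mid d/2$, whence $k + d/2 \equiv k \pmod{p}$ is coprime to $p$. Since $d \ge 4$, every primitive $d$th root is non-real and in particular not in $\tfrac{1}{2}\Z$; the pairing $\zeta \leftrightarrow -\zeta$ then supplies the required perfect matching with $\rho = 0$ and $w = \zeta$.

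The genuinely delicate step is the cyclotomic verification above, which relies critically on the hypothesis $4 \mid d$: for $d \equiv 2 \pmod{4}$ the number $-\zeta$ need not be a primitive $d$th root (e.g.\ $\Phi_6$), so the straightforward $\rho = 0$ matching breaks down. With \autoref{ass:symmetry} established for both families of irreducible factors, the corollary is immediate from \autoref{thm:mainclass}.
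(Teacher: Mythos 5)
Your proof is correct and follows essentially the same route as the paper: reduce to \autoref{thm:mainclass} by verifying \autoref{ass:symmetry} on each irreducible factor, showing $x^d-a$ is even when $2\mid d$ and that $\Phi_d$ is even when $4\mid d$. The only difference is that you spell out the gcd computation for the cyclotomic case (which the paper calls ``straightforward'') and make explicit the preliminary observation that the property is preserved under taking products.
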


\begin{proof}
In light of \autoref{thm:mainclass}, it is sufficient to prove that non-linear  polynomials of the form \(x^d - a\) with \(d\in 2\Z\), or \(\Phi_d\) with \(d\in 4\Z\) have \autoref{ass:symmetry}.

We first consider irreducible factors of the form  \(x^d - a \in\Z[x]\).
The roots of \(x^d -a\) are unnested radicals of the form \(\sqrt[m]{a} \omega_m^j\) for \(j\in\{0,\ldots, m-1\}\) where \(\omega_d:=\eu^{2\pi\iu/d}\).
Recall that \(x^d - a \in\Z[x]\) is irreducible if \(a\) is not the \(N\)th power of an element of \(\Q\) for some \(N>1\) with \(N\mid d\) (cf.\ \cite{mordell1953linear}).
When \(d\) is even, it follows that \(x^d -a\) is even and so possesses  \autoref{ass:symmetry}.

The cyclotomic polynomial \(\Phi_d\) is irreducible and its roots are the primitive \(d\)th roots of unity.
Under the assumption that \(d\) is a multiple of four, it is straightforward to show that \(\Phi_d\) is even.
It follows that for such \(d\), \(\Phi_d\) has \autoref{ass:symmetry}.
\end{proof}

We note with our approach we cannot lift the additional assumption that \(d\in 4\Z\) for cyclotomic factors:
the set of primitive 18th roots of unity \(\{\omega_{18}, \omega_{18}^5, \omega_{18}^7, \omega_{18}^{11}, \omega_{18}^{13}, \omega_{18}^{17}\}\) show that \(\Phi_{18}\) does not have \autoref{ass:symmetry}.

\subsection{Proof of \autoref{thm:mainclass}} \label{ssec:4.2}
We now prove our main result.
A worked example, demonstrating our approach, is given in \appref{app:KB}.
\thmmainclass*
\begin{proof}

Let \(\seq[\infty]{u_n}{n=0}\) be a hypergeometric sequence whose polynomial coefficients are both monic and satisfy \autoref{ass:symmetry}.
We assume without loss of generality that the associated shift quotient \(r(x):= q(x)/p(x)\) is harmonious.
As previously noted, each instance of the  Threshold Problem \((\seq{u_n}{n}, t)\) with \(t\in\Q\) reduces to checking an equality of the form
\begin{equation} \label{eq:gammaprod}
 \Gamma(\beta_1)\cdots \Gamma(\beta_d) = t \Gamma(\alpha_1)\cdots \Gamma(\alpha_d)
\end{equation}
where \(\{\alpha_1,\ldots, \alpha_d\} =: \mathcal{R}_p\) and \(\{\beta_1,\ldots, \beta_d\} =: \mathcal{R}_q\)
are the multisets of the roots of the respective polynomial coefficients \(p\) and \(q\). 
The proof is split into two parts: a reduction to an equality testing problem and a proof of decidability subject to the truth of Schanuel's conjecture.

\subparagraph*{Reduction to Equality Testing.}
Since \(pq\) is a monic polynomial, the rational elements of the multisets \(\mathcal{R}_p\) and \(\mathcal{R}_q\)
are rational integers (so we can assume their contributions are absorbed into the rational parameter \(t\) in \eqref{eq:gammaprod}).
Further, under \autoref{ass:symmetry} there are perfect matchings on both of the graphs \(\mathcal{G}_p\) and \(\mathcal{G}_q\), which we denote by \(\mathcal{M}_p\) and \(\mathcal{M}_q\) respectively.
Indeed, we recall that in \(\mathcal{G}_p\) the edges in \(\mathcal{M}_p\) are of the form \(e(\alpha_-, \alpha_+)\) such that \(\alpha_\pm = \rho_e \pm w_e\) where \(\rho_e \in \tfrac{1}{2}\Z\) and \(w_e\) is an algebraic number not in \(\tfrac{1}{2}\Z\) (and similarly for \(\mathcal{G}_q\) and \(\mathcal{M}_q\)).
In this setting, we repeatedly apply the recurrence formula and absorb the resulting algebraic factors into a single term \(\theta\) in order to rewrite \eqref{eq:gammaprod} as
	\begin{equation} \label{eq:gammaprod2} 
			\prod_{i\in\mathcal{M}_q} \Gamma(w_i)\Gamma(-w_i) = \theta \prod_{j\in\mathcal{M}_p} \Gamma(w_j)\Gamma(-w_j).
	\end{equation}

Consider the set of algebraic numbers \(\{w_1,\ldots, w_M\}\) determined by the parameters in~\eqref{eq:gammaprod2}.
We denote by \(S':=\{s_1', \ldots, s_m'\}\) a maximal subset of \(\{w_1, \ldots, w_M\}\) for which the elements of \(\{\pi, \pi\iu\} \cup \pi S'\) are \(\Q\)-linearly independent (here \(\pi S':=\{\pi s_1', \ldots, \pi s_m'\}\)).
Then for each \(k\in\{1,\ldots, M\}\), write \(w_k\) as a \(\Q\)-linear sum of elements in \(\{s_1',\ldots, s_m'\}\) so that
	\begin{equation*}
		w_k = \frac{x_{k1}}{y_{k1}} s_1' + \cdots + \frac{x_{km}}{y_{km}} s_m'.
	\end{equation*}
We define \(s_j := s_j' / \operatorname{lcm}(y_{1j}, y_{2j}, \ldots, y_{Mj})\) for each \(j\in\{1,\ldots, m\}\).
Now we can write each \(w_k\in\{w_1,\ldots, w_M\}\) as a \(\Z\)-linear sum of elements in the normalised set \(S:= \{s_1, \ldots, s_m\}\).

We apply \autoref{lem:pairproduct} to \eqref{eq:gammaprod2} and, by the preceding paragraph, determine that the problem of 
testing the equality in \eqref{eq:gammaprod2} reduces to that of determining whether a certain non-trivial polynomial with coefficients in \(\Q(\theta)\) vanishes at a given point (this is analogous to the process in \autoref{prop:maingauss} and \autoref{ex:wkexample}).
More specifically, we want to test whether a given non-trivial polynomial  \(P\in\Q(\theta)[X_1,\ldots, X_{4m+4}]\) satisfies
	\begin{equation} \label{eq:polytest}
		P \bigl( \pi, \pi\iu, \pi S, \pi S\iu, \eu^{\pi}, \eu^{\pi \iu}, \eu^{\pi S}, \eu^{\pi S\iu} \bigr) = 0.
	\end{equation}
Here \(\pi S\iu := \{\pi s_1\iu, \ldots, \pi s_m \iu\}\), \(\eu^{\pi S} := \{\eu^{\pi s_1}, \ldots, \eu^{\pi s_m}\}\), and likewise for \(\eu^{\pi S\iu}\).  
Note that we need only consider a polynomial in \(4m+4\) variables as the parameters \(\{w_1,\ldots, w_M\}\) of our problem instance are given by \(\Z\)-linear combinations of the elements of \(S\cup S\iu\).
We claim that  the equality in \eqref{eq:polytest} cannot hold if Schanuel's conjecture is true.
We prove this claim below.

\subparagraph*{Conditional Decidability Subject to Schanuel's conjecture.}
Consider the set
	\begin{equation*}
	\mathfrak{S}:= \bigl\{\pi, \pi\iu, \pi S, \pi S\iu, \eu^{\pi}, \eu^{\pi \iu}, \eu^{\pi S}, \eu^{\pi S\iu} \bigr\}
	\end{equation*}
with cardinality \(4m+4\).
We observe that the elements in the subset  \(\{\pi, \pi\iu, \pi S, \pi S\iu\} \subset \mathfrak{S}\) 
are \(\Q\)-linearly independent.
It follows that if Schanuel's conjecture is true, then \(\mathfrak{S}\) possesses a subset of cardinality at least \(2m+2\) whose elements are algebraically independent.
By construction, this algebraically independent subset is necessarily 	
\(	\{\pi, \eu^{\pi}, \eu^{\pi S}, \eu^{\pi S\iu}\}\)
 since the \(2m+2\) elements of \(\{\pi, \pi\iu, \pi S, \pi S\iu\}\) are pairwise algebraically dependent and \(\eu^{\pi\iu}=-1\).

We now rewrite the equality in~\eqref{eq:polytest} in terms of the (obvious) polynomial \(\hat{P}\)
 that absorbs the algebraically dependent parameters of \(S\cup S\iu\) into the coefficients. 
 That is to say, we employ a polynomial \(\hat{P} \in \mathbb{L}(X_1,\ldots, X_{2m+2})\) where \(\mathbb{L}\) is the Galois closure of the number field \(\mathbb{Q}(\theta)(S, S\iu)\) and evaluate \(\hat{P}\) on the algebraically independent subset \(	\{\pi, \eu^{\pi}, \eu^{\pi S}, \eu^{\pi S\iu}\}\) of \(\mathfrak{S}\).
 It follows that the equality in~\eqref{eq:polytest} holds only if
	\begin{equation} \label{eq:polytest2}
		\hat{P} ( \pi, \eu^{\pi}, \eu^{\pi S}, \eu^{\pi S\iu} ) = 0.
	\end{equation}
By \autoref{lem:alg2rat}, the equality in~\eqref{eq:polytest2} holds only if there exists a non-trivial polynomial \(Q\in\Q[X_1,\ldots, X_{2m+2}]\) such that \(Q ( \pi, \eu^{\pi}, \eu^{\pi S}, \eu^{\pi S\iu} )=0\).
Recall that if Schanuel's conjecture is true, then the elements of the set \(\{\pi, \eu^{\pi}, \eu^{\pi S}, \eu^{\pi S\iu}\}\)  are algebraically independent over \(\Q\), from which we deduce that the preceding equality cannot hold.

Subject to the truth of Schanuel's conjecture, we can determine equality tests of the above form.
Thus we have conditional decidability of the  Threshold Problem for the desired class of hypergeometric sequences.
\end{proof}

\begin{remark} \label{rem:MacintyreWilkie}
We note that the equality test 
	\(
		\mathcal{Q} ( \pi, \eu^{\pi}, \eu^{\pi S}, \eu^{\pi S\iu} )=0
	\) can be realised as a proposition in the first-order theory of the reals with exponentiation.
Macintyre and Wilkie \cite{macintyre1996decidability} established decidability of said theory subject to the truth of Schanuel's conjecture.
As noted in previous works, careful inspection of Macintyre and Wilkie's algorithm reveals that correctness is independent of the truth of Schanuel's conjecture.
Indeed, Schanuel's conjecture is only used to prove  termination.
Thus if we apply Macintyre and Wilkie's  algorithm to determine whether the equality  \(\mathcal{Q} ( \pi, \eu^{\pi}, \eu^{\pi S}, \eu^{\pi S\iu} )=0\) holds and find the procedure terminates, then the output is certainly correct.

We note that Macintyre and Wilkie's  algorithm terminates unless the inputs constitute a counterexample to Schanuel's conjecture.
Thus, the process underlying the proof of \autoref{thm:mainclass} presents an interesting prospect  in the sense described by Richardson in \cite{richardson1992elementary} (see also \cite{richardson1997recognize}) ``A failure of the [process] to terminate would be even more interesting than [its] success.''
\end{remark}

\begin{remark} \label{rem:target2}
 Recall \autoref{rem:target} where we extended our class of problem instances to include setups with \(u_0,t\in\mathbb{L}(\pi,\eu^{\pi})\).
 Under the assumption that Schanuel's conjecture is true we can include a broader range of setups.
 Notice, for instance, that the algebraic independence of \(\pi\) and \(\eu\) is currently unknown; however, if Schanuel's conjecture is true, then it follows that \(\pi\) and \(\eu\) are algebraically independent.
 	Thus, subject to the truth of Schanuel's conjecture, we can extend our results in \autoref{prop:maingauss},  \autoref{thm:mainquadratic}, and \autoref{thm:mainclass} to instances where \(u_0, t\in\mathbb{L}(\pi, \eu)\).
 	In fact, we can go further in this direction since the truth of Schanuel's conjecture implies the algebraic independence of the numbers \(\eu\), \(\eu^{\pi}\), \(\eu^{\eu}\), \(\pi\), \(\pi^\pi\), \(\pi^\eu\), \(2^{\pi}\), \(2^\eu\), \(\log \pi\), \(\log 2\), \(\log 3\), \(\log\log 2\), \((\log 2)^{\log 3}\), \(2^{\sqrt{2}}\), and many more (cf.~\cite[Conjecture $S_7$]{ribenboim2000numbers}). 
\end{remark}

\section{Conclusion} \label{sec:conclusion}

\subparagraph*{Summary.}

In this paper we establish (un)conditional decidability results for the Threshold Problem for hypergeometric sequences and, as a side-effect, (un)conditional decidability results for the Membership Problem for hypergeometric sequences.
Previous works have considered the Membership Problem for hypergeometric sequences~\cite{kenison2020positivity,nosan2022membership}; however, the approach in those works cannot handle instances of the Threshold Problem.
The novelty of our approach is the combination of a classical convergence result (\autoref{thm:productgamma}) with results on the algebraic independence of common mathematical constants.

\subparagraph*{Obstacles.}

Let us illustrate an immediate obstacle to the methods herein.
We cannot handle parameters drawn from biquadratic fields because the monic polynomials that split over such fields are not necessarily amenable to our approach.
Recall that biquadratic fields are a particularly well-behaved class of quartic fields (such as \(\Q(\sqrt{5},\sqrt{13})\) and \(\Q(\sqrt{21},\sqrt{33})\)).
For example the minimal polynomial \(x^4 -5x^3 -71x^2 +120x +1044\) of \((5+3\sqrt{5} + \sqrt{13} + 3\sqrt{65})/4\in \Q(\sqrt{5},\sqrt{13})\) does not have \autoref{ass:symmetry}.
Similarly, the minimal polynomial \(x^4 -x^3 -16x^2 +37x -17\) of \((1+\sqrt{21} +\sqrt{33} - \sqrt{77})/4 \in\Q(\sqrt{21},\sqrt{33})\)
does not have \autoref{ass:symmetry}.
Both of these examples are taken from \cite{williams1970integers}.

We also note that the class of sequences we can handle does not permit standard operations on the parameters such as addition.
Consider, for example, that \(\sqrt{2}\) and \(\sqrt[4]{2}\) are both unnested radicals whose minimal polynomials satisfy \autoref{ass:symmetry}; however, the minimal polynomial \(x^4 -4x^2-8x+2\) of \(\sqrt{2} + \sqrt[4]{2}\) does not possess \autoref{ass:symmetry}.

It is not clear how to extend our approach to hypergeometric sequences with larger classes of parameters.
For example, the parameters herein are all algebraic integers.
Even in the restricted setting of hypergeometric sequences with rational parameters (as in the work of Nosan et al.~\cite{nosan2022membership}) it is beyond the state of the art to evaluate equalities between associated gamma products.
Indeed, for \(s\in\{1/6, 1/4, 1/3, 2/3, 3/4\}\) and \(n\in\N\), it is known that \(\Gamma(n + s)\)
is a transcendental number and algebraically independent of \(\pi\) (cf.\ \cite{Waldschmidt2006}).
However, transcendence of the gamma function at other rational points is not known.
It is notable that for rational parameters determining equality between gamma products is decidable subject to the truth of the Rohrlich--Lang conjecture (which itself concerns multiplicative relations for the gamma function~\cite{lang1966introduction,Waldschmidt2006}).

\subparagraph*{Directions for Future Work.}

We give one class of hypergeometric sequences whose parameters link the related works.
Consider the class of hypergeometric sequences whose parameters lie in \(\Q(\iu)\).
For such sequences, decidability of both the Membership and Threshold Problems is open.

The sequences in this class generalise the setting discussed here (and in work by Kenison et al.\ \cite{kenison2023membership}) by removing the condition that the polynomial coefficients of the defining recurrence relation are both monic.
Further, results in this direction would extend the discussion of the Membership Problem for hypergeometric sequences with rational parameters in work by Nosan et al.~\cite{nosan2022membership} as well as those sequences whose polynomial coefficients have irreducible factors in \(\mathcal{C}\) (i.e., each irreducible factor has a root with rational real part) discussed herein.

\bibliography{hypergeombib}

\appendix
\section{Appendixed Proofs}
\label{app:turing}

\lemMPreduction*
\begin{proof}
When \(r(n)\equiv 0\) decidability of both problems is trivial.
So we assume that this is not the case for the remainder of the proof.
Write the shift quotient \(r(n)= c\frac{q(n)}{p(n)}\) where \(p,q\in\Q[x]\) are monic polynomials and \(c\in\Q\).
Without loss of generality, we can assume that \(c>0\); for otherwise, sequence \(\seq[\infty]{u_n}{n=0}\) is given by the interlacing of two hypergeometric sequences with this property.
Let us assume that \(r(n)\) diverges to \(+\infty\).
In this case it is easily seen that, for each \(t\in\Q\), there exists a computable \(N_0\in\N\) such that if \(n\ge N_0\) then \( |u_n| = |u_0 \cdot \prod_{k=0}^n r(k)|> |t| \).
Thus to determine the Threshold Problem in this instance, we need only determine the ultimate sign of  \(\seq[\infty]{u_n}{n=0}\) (which is straightforward).
Moreover, we can compute a bound \(N_1\)  after which the sign of \(\seq[\infty]{u_n}{n=0}\) is constant.
	Thus the Threshold Problem in such instances reduces to an exhaustive search that asks whether \(u_n\ge t\) for each \(n\in\{0, 1, \ldots, \max\{N_0,N_1\}\}\).
	Mutatis mutandis, decidability is similarly established for instances of the Threshold Problem where \(r(n)\) converges to a limit \(\ell\) with \(|\ell|\neq 1\).
	
	The argument for the Membership Problem is similar and given in full in \cite{nosan2022membership}.
\end{proof}

\propreductiontogamma*
\begin{proof}%
From \autoref{lem:MPreduction}, we need only consider cases where the associated shift quotient \(r(k)\) converges to \(\pm 1\) and, by \autoref{thm:productgamma}, we can assume without loss of generality that \(r(k)\) is harmonious.
We treat the case that the sequence of partial products \(\seq{\prod_{k=0}^n r(k)}{n}\) is eventually strictly increasing.
The case where the sequence of partial products is eventually strictly decreasing follows \emph{mutatis mutandis}.

Consider an instance \((\seq{u_n}{n}, t)\) of the Threshold Problem with \(r(k)\) as above and recall our running assumption that \(u_0=1\).
Let \(\tau := \prod_{k=0}^\infty r(k)\).
We assume that the sequence \(\seq{u_n}{n}\), whose terms \(u_n = \prod_{k=0}^n r(k)\) are given by partial products, is eventually strictly increasing.
Then there exists a computable \(N\in\N\) such that \(u_n < \tau\) for each \(n\ge N\).
There are two subcases to consider.
First, if \(\tau \le t\) then it is clear that \(u_n < t\) for each \(n\ge N\) and so we return the answer \texttt{no} to the Threshold Problem.
Second, if \(\tau > t\) then there exists an \(N_1\in\N\) such that \(u_n > t\) for each \(n\ge N_1\).
So decidability of Threshold in this instance reduces to an exhaustive check that asks whether \(u_n\ge t\) for each \(n\in\{0, 1, \ldots, {N-1}\}\).

All that remains is to decide whether \(\tau \le t\).
It is clear that, by computing \(\tau\) to sufficient precision, the problem of determining whether \(\tau<t\) or \(\tau>t\) is recursively enumerable.
Thus we need only test whether the equality \(\tau=t\) holds.
By \autoref{thm:productgamma}, we know that \(\tau = \prod_{j=1}^m {\Gamma(\beta_j)}/{\Gamma(\alpha_j)}\), 
from which we deduce the desired result.

For the sake of brevity, we omit the argument for the reduction from the Membership Problem, which is
near identical to the reasoning displayed above.
\end{proof}

\propreduction*
\begin{proof}%
By \autoref{thm:productgamma} and \autoref{prop:reductiontogamma2}, we need only consider hypergeometric sequences with harmonious shift quotients.
Thus, we continue under this assumption.

Let \((\seq{u_n}{n}, t)\) be an instance of the Membership Problem as above and, in addition, assume that \(\seq{u_n}{n}\) is eventually decreasing.
We note there is a computable bound \(N_0\in\N\) such that \(u_{n+1}\le u_{n}\)  for all \(n\ge N_0\).
Now let \(\tau\) be the limit of the the sequence \(\seq{u_n}{n}\).
Either we have that \(\tau=t\) or \(\tau\neq t\).
We note that the Membership Problem is decidable when \(\tau\neq t\) and so it remains to test cases when \(\tau=t\).
\begin{itemize}
\item Suppose that an oracle for the Threshold Problem returns the answer \texttt{yes} to the problem instance \((\seq[\infty]{u_n}{n=N_0}, \tau)\).
Since \(u_{n+1} < u_n\) for all \(n\ge N_0\), we deduce that \(\tau\) is not a member of the sequence \(\seq[\infty]{u_n}{n=N_0}\).
Thus all that remains is to test whether \(\tau\in\{u_0, u_1, \ldots, u_{N_0-1}\}\).

\item Suppose that an oracle for the Threshold Problem returns the answer \texttt{no} to the problem instance \((\seq[\infty]{u_n}{n=N_0}, \tau)\).
Then there is a computable bound \(N_1\) such that \(u_n < \tau\) for all \(n\ge N_1\).
Thus we can decide the Membership Problem by testing whether \(\tau\in\{u_0,u_1,\ldots, u_{N_1-1}\}\).
\end{itemize}

We note a similar argument to that given above holds for testing the Membership Problem for hypergeometric sequences that are eventually increasing.
Thus we deduce the desired result: that decidability of the Membership Problem for hypergeometric sequences Turing-reduces to that of the Threshold Problem for hypergeometric sequences.
\end{proof}
\section{Threshold for the Kepler--Bouwkamp Constant Approximation}
\label{app:KB}

In this section we demonstrate that we can conditionally determine the Threshold Problem for  instances \((\seq[\infty]{v_n}{n=3}, t)\) where sequence \(\seq[\infty]{v_n}{n=3}\) is the recurrence sequence associated with the approximation of the Kepler--Bouwkamp Constant (\autoref{ex:KB}) by the \([2,2]\)-Pad\'{e} approximant of \(\cos(\wc)\).

Let us begin.
Recall that the sequence \(\seq[\infty]{v_n}{n=3}\) has terms given by \(v_n := \prod_{k=3}^n \frac{12k^2-5\pi^2}{12k^2 + \pi^2}\).
By \cref{prop:reductiontogamma2}, decidability of problem instance \((\seq[\infty]{v_n}{n=3}, t)\) reduces to determining whether 

				\begin{equation} \label{eq:BW}
					\prod_{k=3}^\infty \frac{12k^2- 5\pi^2}{12k^2 + \pi^2} = \frac{\Gamma(3- \tfrac{\iu}{6}\sqrt{3}\pi)\Gamma(3+\tfrac{\iu}{6}\sqrt{3}\pi)}{\Gamma(3-\frac{1}{6}\sqrt{15}\pi)\Gamma(3+\frac{1}{6}\sqrt{15}\pi)}
				\end{equation} 
				is equal to \(t\). (We note that the formulation as a gamma product is given to us by \autoref{thm:productgamma}.)
				
We consider the numerator and denominator in turn.
First, we use the translation and reflection properties of the gamma function to write the numerator of \eqref{eq:BW} as
\begin{align}
	\Gamma \mleft(3- \frac{\sqrt{3}\pi\iu}{6}\mright)\Gamma\mleft(3+\frac{\sqrt{3}\pi\iu}{6}\mright)
	&= \Gamma\mleft(-\frac{\sqrt{3}\pi\iu}{6}\mright)\Gamma\mleft(\frac{\sqrt{3}\pi\iu}{6}\mright) \prod_{k=0}^2 \mleft( k^2 + \frac{\pi^2}{12} \mright) \nonumber \\
	&= \frac{4\sqrt{3}}{\eu^{\frac{\pi^2}{2\sqrt{3}}} - \eu^{-\frac{\pi^2}{2\sqrt{3}}}} \prod_{k=0}^2 \mleft( k^2 + \frac{\pi^2}{12} \mright) \nonumber \\
	&= \frac{4\sqrt{3} \eu^{\frac{\pi^2}{2\sqrt{3}}}}{\eu^{\frac{\pi^2}{\sqrt{3}}} - 1} \prod_{k=0}^2 \mleft( k^2 + \frac{\pi^2}{12} \mright).  \label{eq:numerator}
\end{align}

Second, we likewise write the denominator of \eqref{eq:BW} as
\begin{align}
	{\Gamma\mleft(3-\frac{\sqrt{15}\pi}{6}\mright)\Gamma\mleft(3+\frac{\sqrt{15}\pi}{6}\mright)} 
	&= {\Gamma\mleft(-\frac{\sqrt{15}\pi}{6}\mright)\Gamma\mleft(\frac{\sqrt{15}\pi}{6}\mright)} \prod_{k=0}^2 \mleft( k^2- \frac{5\pi^2}{12}\mright) \nonumber \\
	&=  -\frac{12\iu}{\sqrt{15} \left(\eu^{\frac{\sqrt{15}\pi^2\iu}{6}} - \eu^{-\frac{\sqrt{15}\pi^2\iu}{6}} \right)}
	\prod_{k=0}^2 \mleft( k^2- \frac{5\pi^2}{12}\mright) \nonumber \\
	&= -\frac{12\iu \eu^{\frac{\pi^2\sqrt{15}\iu}{6}} }{\sqrt{15} \left(\eu^{\frac{\pi^2\sqrt{15}\iu}{3}} - 1 \right)}
	\prod_{k=0}^2 \mleft( k^2- \frac{5\pi^2}{12}\mright). \label{eq:denominator}
\end{align}

Let us consider the decision problem at hand.
Taken together, we are tasked to determine whether the ratio of \eqref{eq:numerator} and \eqref{eq:denominator} is equal to \(t\).
For this equality to hold, a simple rearrangement argument leads us to the following:
there is a non-trivial polynomial \(\mathcal{Q}\in\Q[x,y,z]=0\) such that 
\(\mathcal{Q} \bigl( \pi^2, \eu^{\frac{\pi^2}{\sqrt{3}}}, \eu^{\frac{\sqrt{15}\pi^2\iu}{3}} \bigr)=0\) (the fact that such a polynomial with rational coefficients exists is guaranteed by \autoref{lem:alg2rat}).
However, we claim that no such polynomial \(\mathcal{Q}\) exists if Schanuel's conjecture is true. 

Let us prove the above claim.
Consider the set
	\begin{equation*}
	 \mathfrak{S}:= \mleft\{ \tfrac{\pi^2}{\sqrt{3}}, \tfrac{\sqrt{15}\pi^2\iu}{3}, \pi\iu,	\eu^{\frac{\pi^2}{\sqrt{3}}}, \eu^{\frac{\sqrt{15}\pi^2\iu}{3}}, \eu^{\pi\iu} 	\mright\}.
	 \end{equation*}
We note that the elements of  \(\bigl\{\tfrac{\pi^2}{\sqrt{3}}, \tfrac{\sqrt{15}\pi^2\iu}{3}, \pi\iu \bigr\}\) are \(\Q\)-linearly independent.
Thus, Schanuel's conjecture predicts that there is a subset of \(\mathfrak{S}\) of size at least \(3\) whose elements are algebraically independent.
Since \(\tfrac{\pi^2}{\sqrt{3}}\) and \(\tfrac{\sqrt{15}\pi^2\iu}{3}\) are algebraically dependent and \(\eu^{\pi\iu}=-1\), we deduce that, subject to the truth of Schanuel's conjecture, the elements of \(\bigl\{ \pi^2, \eu^{\frac{\pi^2}{\sqrt{3}}}, \eu^{\frac{\sqrt{15}\pi^2\iu}{3}} \bigr\}\) are algebraically independent.
From the preceding work, we deduce that there is no non-trivial polynomial \(\mathcal{Q}\) for which \(\mathcal{Q} \bigl( \pi^2, \eu^{\frac{\pi^2}{\sqrt{3}}}, \eu^{\frac{\sqrt{15}\pi^2\iu}{3}} \bigr)=0\).
It follows that the desired equality cannot hold.
Thus we can conditionally decide the Threshold Problem for instances \((\seq[\infty]{v_n}{n=0}, t)\) with \(t\in\Q\).

\end{document}